\newtheorem{thm}{Theorem}
\newtheorem{lem}[thm]{Lemma}
\newtheorem{defi}[thm]{Definition}
\newcommand{\setX}{\mathcal{X}} 
\newcommand{\setY}{\mathcal{Y}}
\theoremstyle{thmstyleone}%
\begin{document}

\title[Article Title]{Sparse Bayesian Learning for Label Efficiency in Cardiac Real-Time MRI}

\author[1,2]{Felix Terhag} 
\author[1]{Philipp Knechtges}
\author[1]{Achim Basermann}
\author[4]{Anja Bach}
\author[4]{Darius Gerlach}
\author[4]{Jens Tank}
\author[2,3]{Raúl Tempone}

\affil[1]{\orgdiv{Institute of Software Technology, High-Performance Computing}, \orgname{German Aerospace Center (DLR)}, \city{Cologne}, \country{Germany}}

\affil[2]{\orgdiv{Chair of Mathematics for Uncertainty Quantification}, \orgname{Department of Mathematics, RWTH Aachen University}, \city{Aachen}, \country{Germany}}
\affil[3]{\orgdiv{Computer, Electrical and Mathematical Sciences and
Engineering Division (CEMSE)}, \orgname{King Abdullah University of Science and Technology (KAUST)}, \city{Thuwal}, \country{Saudi Arabia}}
\affil[4]{\orgdiv{Institute of Aerospace Medicine}, \orgname{German Aerospace Center (DLR)}, \city{Cologne}, \country{Germany}}

\abstract{Cardiac real-time magnetic resonance imaging (MRI)  is an emerging technology that images the heart at up to 50 frames per second, offering insight into the respiratory effects on the heartbeat. However, this method  significantly increases the number of images that must be segmented to derive critical health indicators. Although neural networks perform well on inner slices, predictions on outer slices are often unreliable.  

This work proposes sparse Bayesian learning (SBL) to predict the ventricular volume on outer slices with minimal manual labeling to address this challenge. The ventricular volume over time is assumed to be dominated by sparse frequencies corresponding to the heart and respiratory rates. Moreover, SBL identifies these sparse frequencies on well-segmented inner slices by optimizing hyperparameters via type -II likelihood, automatically pruning irrelevant components. The identified sparse frequencies guide the selection of outer slice images for labeling, minimizing posterior variance. 

This work provides performance guarantees for the greedy algorithm. Testing on patient data demonstrates that only a few labeled images are necessary for accurate volume prediction. The labeling procedure effectively avoids selecting inefficient images. Furthermore, the Bayesian approach provides uncertainty estimates, highlighting unreliable predictions (e.g.,  when choosing suboptimal labels).}

\keywords{Sparse Bayesian learning, label efficiency, expectation maximization (EM)   algorithm, submodular set functions, real-time magnetic resonance imaging (MRI) }

\maketitle
\section{Introduction}
Recent advances in automated segmentation have improved the analysis of cardiac magnetic resonance imaging (MRI), driven by publicly available datasets \cite{Acdc_2018, MnM_2021} and the U-Net architecture \cite{isensee_nnu-net_2021, Long_FCN_2017}. However, challenges remain because these approaches struggle to generalize to rare pathologies \cite{Acdc_2018, MnM_2021} or different heart regions \cite{terhag2025uq}. Real-time MRI amplifies these problems because it is a new, emerging technology with few publicly available datasets. Real-time MRI enables video sequence acquisition at up to 50 frames per second (fps) for multiple slices throughout the heart. Observing respiratory effects on the heartbeat by removing the need for breath-holding, which is typically required in traditional cine MRI scans, is beneficial for practitioners \cite{rt_mri10,MRI_emerging17,zhang_real-time_2014}. However, real-time MRI complicates manually labeling because it requires annotating an entire sequence rather than a single image. For instance, the study dataset, captured at 30 fps, requires a minimum of 10~s  to record sufficient breathing cycles. With at least 15 slices per heart, this results in over 4,500 images requiring annotation.

This work aims to predict ventricular volumes in real-time cardiac MRI with minimal manual labeling. The MRI scans in this study originated from patients with  univentricular hearts, a rare congenital heart disease where patients are born with only a single functioning ventricle, requiring several surgeries altering the geometry of the heart. A more detailed description of the study that acquired the data can be found in \cite{hypofon_study}. This disease complicates training even further. While U-Nets can segment inner slices with high accuracy, they tend to struggle with the outer slices a phenomenon also observed in classical cine MRI approaches \cite{terhag2025uq}.  Experts validated the segmentations in our dataset, identifying inner slices with negligible errors alongside outer slices with unreliable segmentation, complicating the derivation of key clinical parameters. To prevent a domain expert from labeling the whole video sequences of insufficiently segmented slices, we propose a method which
\begin{enumerate}
    \item allows us to extract information from the well-segmented slices to predict the volume of the bad slices with little labeling effort,
    \item helps us choose the frames to label effectively,
    \item and accounts for the uncertainty within the model.
\end{enumerate}

Sparse Bayesian learning (SBL) \cite{Tipp01} is applied to achieve this, exploiting the periodic nature of the volume signal dominated by the heart and respiratory rates. Sparse frequency components are identified using the type II likelihood to determine the hyperparameters of the prior. Moreover, SBL is widely used in diverse fields, such as remote sensing, where it is effective for finding sparse representations from measurement data~\cite{Gemba_SBL_application, SBL_DOA}. In remote sensing applications, the measurement data usually comprise  complex values, while we adapted the SBL algorithm to real-valued data. The sparse prior facilitates transferring information from well-segmented inner slices to less accurate outer slices, requiring only a few  measurements to fit the phase and amplitude of the sparse frequencies. Sparse frequencies can be used to select frames for labeling while minimizing the posterior uncertainty. The trace of the posterior covariance quantifies the posterior uncertainty. T he objective function is weakly submodular; hence, a lower bound for the suboptimality of the greedy algorithm can be set. Figure~\ref{fig:scheme} provides a schematic representation of the proposed approach.

To our knowledge, this approach has not previously been explored. In remote sensing, adaptive methods typically optimize models with respect to the frequencies to identify off-grid direction-of-arrival sources. An optimization concerning  time points is irrelevant in this framework \cite{wang2_adaptive, Wang_adaptive_doa,adaptive4,adaptive3}. Testing on real patient data demonstrates  that a few labeled examples are sufficient for accurate volume predictions. Furthermore, uncertainty estimates highlight unreliable predictions for suboptimal labels.

The rest of the paper is organized as follows. Section~\ref{sec:methods} explores the adapted SBL algorithm and optimization problem. Additionally, this section proves that the greedy algorithm lies within certain bounds of the optimal solution. Next, Section~\ref{sec:application} introduces the real-time MRI application. Then, Section~\ref{sec:results} presents the results on actual patient data, and Section~\ref{sec:conclusion} concludes this paper.

\begin{figure}
\begin{center}
\begin{tikzpicture}[
  node distance=1.5cm and 0.8cm, %
  box/.style={draw, rounded corners, align=center, minimum width=2.5cm, minimum height=1.8cm}, %
  smallbox/.style={draw, rounded corners, align=center, minimum width=2cm, minimum height=1.2cm}, %
  arrow/.style={thick, ->, >=stealth},
  interrupt/.style={thick, sloped, midway, fill=white, inner sep=1pt}, %
  label below/.style={midway, below, sloped},
  label above/.style={midway, above, sloped}
]
\tiny
\newcommand{\pictosize}{0.7cm}
\node[box] (test1) {
    \includegraphics[width=4cm]{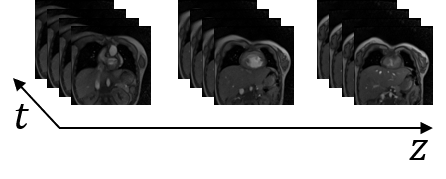} \\
    Cardiac real-time MRI
};

\node[box, below=0.7cm of test1] (test2) {
    \includegraphics[width=4cm]{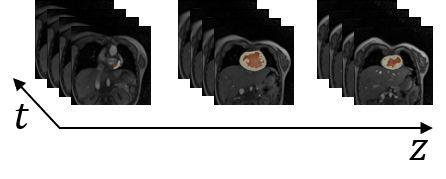} \\
    Segmentations
};

\draw[arrow] (test1) -- node[interrupt, rotate=90] {U-Net} (test2);

\node[below=0.4cm of test2] (pik1) {
    \includegraphics[width=\pictosize]{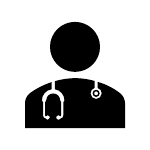}
};
\node[below=0.0cm of pik1] (evaluate) {Evaluate};
\draw[arrow] (test2) -- (pik1);

\newcommand{\lrtikzsize}{2.5cm}
\node[smallbox, left=\lrtikzsize of pik1] (good) {
    \includegraphics[width=2cm]{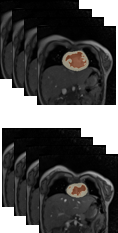} \\
    Good slices
};

\node[smallbox, right=\lrtikzsize of pik1] (bad) {
    \includegraphics[width=2cm]{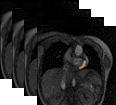} \\
    Bad slices
};

\draw[arrow] (pik1) -- (good);
\draw[arrow] (pik1) -- (bad);

\node[smallbox, align=left,below=0.7cm of good] (example) {
\begin{tabular}{lc}
\multicolumn{2}{l}{\textbf{Calculate sparse prior}}\\
\hline
\textbf{while} $\epsilon > \epsilon_{\min}$:  & \\
\hspace*{2ex}  calculate $\mu_x, \Sigma_x$& \eqref{eq:post_mu_sig} \\
\hspace*{2ex} update $\alpha_m$ & \eqref{eq:alpha_update}\\
\hspace*{2ex} update $\sigma^2$ & \eqref{eq:update_sigma}\\
\end{tabular}
};
\draw[arrow] (good) -- (example);

\node[smallbox,below= 0.7cm of example] (prior) {
    \includegraphics[width=4cm]{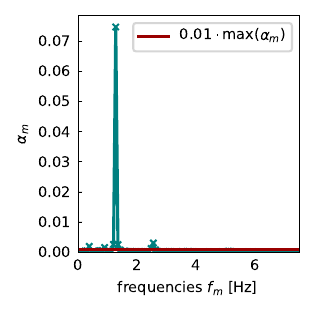}
};
\draw[arrow] (example) --(prior);
\node[below=0.4cm of bad] (next_index) {Draw next Image};
\draw[arrow] (bad) -- (next_index);

\node[below=0.3cm of next_index] (label_image) {
    \includegraphics[width=\pictosize]{figures/pik1.pdf}
};
\draw[arrow] (next_index) -- (label_image);
\node[smallbox,,below=0.7cm of label_image] (test5) {
    \includegraphics[width=1.4cm]{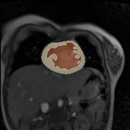}
};

\node[below=0.0cm of label_image] (label_image_text) {Label image};
\draw[arrow] (label_image_text) -- (test5);
\node[smallbox,below=0.7cm of test5] (test6) {
    \includegraphics[width=4cm]{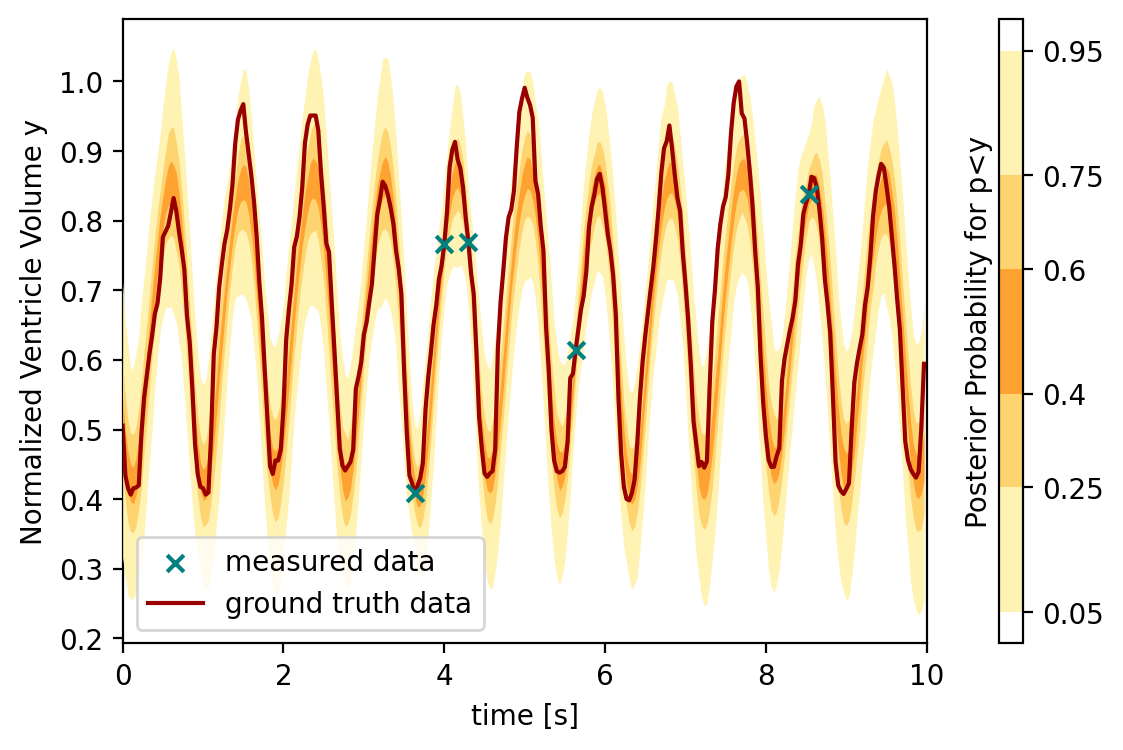}
};
\draw[arrow] (test5) -- node[interrupt, rotate=90] {Update posterior} (test6);

\node[below=0.4cm of test6] (pik3) {
    \includegraphics[width=\pictosize]{figures/pik1.pdf}
};
\node[below=0.cm of pik3] {Evaluate};
\draw[arrow] (test6) -- (pik3);
\newcommand{\tikzsizecross}{1.3cm}
\node[right=\tikzsizecross of pik3] (cross) {
    \includegraphics[width=0.5cm]{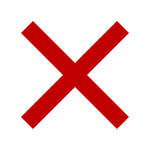}
};

\draw[arrow] (cross.east) -- ++(\tikzsizecross,0) |- (next_index.east);
\draw[thick] (pik3) -- (cross);
\node[smallbox, draw=none, below=3cm of evaluate] (labelnode) {Calculate\\ labeling order\\ following \eqref{eq:greedy_optimum}};

\draw[thick] (prior) -| (labelnode);

\draw[arrow] (labelnode) |- (next_index);
\node[below=5.2cm of labelnode] (check) {
    \includegraphics[width=0.5cm]{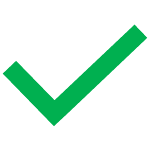}
};
\draw[arrow] (pik3) -| (check);
\end{tikzpicture}
\end{center}
\caption{Scheme of the procedure. A neural network segments the raw images. An expert separates the segmentations into "good" and "bad" slices. The good slices are used to identify the dominating frequencies by empirically deriving a sparse prior in the frequency domain. The prior is used to determine the optimal order to label the images. An expert manually segments the "bad" slices frame by frame until the posterior provides sufficient certainty. The MRIs are short-axis views of univentricular hearts acquired by \cite{hypofon_study}.}
\label{fig:scheme}
\end{figure}
\section{Methods}\label{sec:methods}
\subsection{Modeling}
Cardiac real-time MRI retrieves video sequences of the beating heart due to a faster capture rate than cine MRI. Thus,  for every slice $\ell\in [1,L]$, there are images for all distinct equidistant time  steps $t_1,\ldots,t_N$. Only the ventricle volume and derived parameters, such as stroke volume, are of interest. The volume over time is assumed to be well described by a superposition of frequencies, comprising a subset of $\mathcal{F}=\{f_0=0, f_1, f_2, \ldots , f_{M}\}$, where the main frequencies are the unknown heart and respiratory rates. This work proposes the following linear model:
\begin{equation}\label{eq:lin_model}
    Y=AX+N,
\end{equation} 
with
\begin{itemize}
    \item $Y\in \mathbb{R}^{N\times L}$, the real-valued measurement points $y_{k,\ell}$;
    
    \item $X\in \mathbb{R}^{2M+1\times L}$,  the real and imaginary parts of the complex amplitudes;
    \item $N\in \mathbb{R}^{N\times L}$, additive noise with $n_{k,\ell}\sim \mathcal{N}(0,\sigma^2)$;
    \item $A\in \mathbb{R}^{N\times2M+1}$. the transfer matrix $A=\left[\mathbf{1}_N\ Re(B)\ Im(B) \right]$,
    \item with $\mathbf{1}_N$ an $N$-dimensional vector of 1s and $B\in \mathbb{C}^{N\times M}$ defined as follows,
    \begin{equation}
        B = \begin{bmatrix} 
\exp(-i2\pi f_1t_1) & \dots  & \exp(-i2\pi f_{M}t_{1})\\
    \vdots & \ddots & \vdots \\
    \exp(-i2\pi f_{1}t_{N}) & \dots  & \exp(-i2\pi f_{M}t_{N})
    \end{bmatrix}.
    \end{equation}
\end{itemize}

\subsection{Finding sparse frequencies}\label{ssec:find_freqs}
The above model can describe measurements via superpositioned frequencies of $\mathcal{F}$. Only a few frequencies are assumed to influence the volume substantially. Furthermore, the influential frequencies (e.g.,  heart and respiratory rates) are assumed to be shared over all slices. Identifying these sparse frequencies is beneficial because the model predicts slices where only a few measurement points exist in time . Fewer amplitudes and phases must be found if fewer frequencies exist.

This work applies an approach similar to \cite{SBL_DOA} to determine the sparse frequencies. The procedure is presented in Algorithm~\ref{alg:calc_prior}. In \cite{SBL_DOA}, the authors used  SBL to localize the sparse sources from noisy signals in the signal processing application of estimating the directions of arrival of plane waves from sensor array data. They  obtain  observations from $L$ different snapshots, whereas our method applies the volume over time for $L$ different heart slices. One additional difference of the proposed approach is that the measurements in \cite{SBL_DOA} are complex compared to real-valued measurements.

A Bayesian model is designed to use the SBL  approach. With the assumed additive Gaussian noise of variance $\sigma^2$, the likelihood is
\begin{equation}\label{eq:likelihood}
    p(Y\mid X, \sigma^2) = \frac{\exp\left(-\frac{1}{2\sigma^2}\, \left\Vert AX-Y \right\Vert_F^2\right) }{(2\pi\sigma^2)^{NL/2} }.
\end{equation}
Moreover, SBL  achieves sparsity by employing a Gaussian zero-mean prior on the parameters $X$, where the variance of each parameter is considered a hyperparameter \cite{pml2Book, Tipp01}. This work employs the expectation maximization (EM) scheme to estimate the hyperparameters, where the expectation step calculates the expectation of the log-likelihood with respect to the current parameters (E-step).  The following maximization step updates the parameters to maximize the expectation of the log-likelihood (M-step). As the EM iterations proceed, the algorithm identifies parameters relevant to explaining the observed data. The corresponding prior variances tend to approach zero for parameters not supported by data (i.e., irrelevant to the model). When a variance becomes small, the associated prior distribution effectively becomes a delta function centered at zero, "switching off" those parameters from the model (for a comprehensive explanation of this method, see \cite[Sec~15.2.8]{pml2Book}).

This work aims to determine the sparse frequencies that represent each slice simultaneously without limiting the phase for each frequency. The amplitude of frequency $f_{m,\ell}$, for $m\geq1$ and slice $\ell$ is determined by $\sqrt{x_{m,\ell}^2+x_{m+M,\ell}^2}$, whereas the phase is determined by $\angle(x_{m,\ell},x_{m+M,\ell})$.\footnote{The notation $\angle(a,b)$ denotes the phase of the complex number $a+ib.$} The prior is shared over all slices because the frequencies are assumed to behave similarly over all slices. However, this method should not influence the phase because it is independent of the separate slices. The latter is a modeling choice because the time between capturing the slices is unknown, making it impossible to know the phase of one slice in relation to the next. Thus, the prior should be uniform with respect to the phases. This work introduces a slice-independent variance $\alpha_m>0$ to achieve this uniformity . The prior $p_0(x_{0,\ell},\alpha_0)=\mathcal{N}(0,\alpha_0)$ was selected for frequency $f_0=0$ and   
\begin{equation}
    p_m(x_{m,\ell}\mid\alpha_m) = p_{m+M}(x_{m+M,\ell}\mid\alpha_m)=\mathcal{N}(0,\alpha_m)
\end{equation}
for $m>0$. By setting $\mathbf{\gamma} = [\alpha_0,\alpha_1,\ldots,\alpha_M,\alpha_1,\ldots,\alpha_M]^T$ and $\Gamma = \text{diag}(\gamma)$, the prior of $X$ simplifies to $L$ independent multivariate normal distribution for each slice $L$
\begin{equation}\label{eq:prior}
    p(X\mid\mathbf{\alpha})=\prod_{\ell=1}^L \prod_{m=0}^{2M+1}p_m(x_{m,\ell}\mid\gamma_m) = \prod_{\ell=1}^L\mathcal{N}(0,\Gamma).
\end{equation}
Combining the Gaussian prior and likelihood yields a Gaussian posterior $p(X|Y,\alpha,\sigma^2)$ with the mean and covariance given by
\begin{align}\label{eq:post_mu_sig}
\begin{split}
    \mu_x &= \Gamma A^T \Sigma_y^{-1}Y\\
    \Sigma_x &= \left(\frac{1}{\sigma^2}A^TA+\Gamma^{-1}\right)^{-1},
\end{split}
\end{align}%
where $\Sigma_y=\sigma^2I_N+A\Gamma A^T$ denotes the data covariance. This formulation reveals that the sparsity of $\mu_x$ controls the row sparsity. If entry $\alpha_m = 0$ for $m>0$, it follows that $\gamma_m=\gamma_{m+M}=0$, ensuring the posterior satisfying $p(x_m=x_{m+M}=0|Y,\alpha_m=0)=1$. This approach makes both rows of $X$ corresponding to frequency $f_m$ zero.

\subsubsection{Sparse Bayesian learning and hyperparameter estimation}
The SBL approach relies on estimating the hyperparameters $\alpha_m$ and $\sigma^2$ with empirical Bayes, setting the hyperparameters that maximize the marginal likelihood.  
The marginal likelihood $p(Y|\alpha,\sigma^2)$ is calculated by treating the amplitudes $X$ as a nuisance parameter and marginalizing over them \cite{MSBL2007, SBL_DOA}, obtaining the following:
\begin{equation}
    p(Y|\alpha,\sigma^2)=\int p(y|X,\sigma^2)p(X|\alpha)dX=\frac{\exp\left(-\text{tr}(Y^T\Sigma_y^{-1}Y)\right)}{\left(\pi^N \det\Sigma_y\right)^L}.
\end{equation}
The aim is to maximize the log marginal likelihood, obtaining the following cost function:
\begin{equation}
    \mathcal{L}(\alpha,\sigma^2)=-\text{tr}(Y^T\Sigma_y^{-1}Y)-L \log \det \Sigma_y \propto \log p(Y|\alpha,\sigma^2).
\end{equation}
Next, this cost function is derived with respect to $\alpha_m$. Note that $\alpha_0$ only occurs once in the diagonal matrix $\Gamma$, whereas  $\alpha_m$ for $m>0$ occurs at $\Gamma_{m,m}$ and $\Gamma_{m+M,m+M}$. Here, the focus is on $m>0$, as the derivative with respect to $\alpha_0$ can be calculated analogously. Hence, for $m>0$, 
\begin{equation}
\frac{\partial \Sigma_y^{-1}}{\partial \alpha_m}=-\Sigma_y^{-1}\frac{\partial \Sigma_y}{\partial \alpha_m} \Sigma_y^{-1} = -\Sigma_y^{-1}a_m a_m^T \Sigma_y^{-1}-\Sigma_y^{-1}a_{m+M} a_{m+M}^T \Sigma_y^{-1}
\end{equation}
\begin{equation}
    \frac{\partial \log \det\left(\Sigma_y\right)}{\partial \alpha_m} = \text{tr}\left(\Sigma_y^{-1}\frac{\partial \Sigma_y}{\partial \alpha_m}\right)=a_{m}^T\Sigma_y^{-1}a_{m}+a_{m+M}^T\Sigma_y^{-1}a_{m+M}
\end{equation}
yields the following derivative: 
\begin{align}\label{eq:cost_derivative}
\begin{split}
    \frac{d\mathcal{L}(\alpha,\sigma^2)}{d\alpha_m}&=\text{tr}\left(Y^T\Sigma_y^{-1}a_m a_m^T \Sigma_y^{-1} Y + Y^T\Sigma_y^{-1}a_{m+M} a_{m+M}^T \Sigma_y^{-1} Y \right) \\
    &-L\left(a_{m}^T\Sigma_y^{-1}a_{m}+a_{m+M}^T\Sigma_y^{-1}a_{m+M}\right)\\
    &=||Y^T\Sigma_y^{-1}(a_m+a_{m+M})||_2^2\\
    &-L\left(a_{m}^T\Sigma_y^{-1}a_{m}+a_{m+M}^T\Sigma_y^{-1}a_{m+M}\right),
\end{split}
\end{align}
where $a_m$ denotes the $m$th  column vector of $A$ (cf. \cite{SBL_DOA}). The M-step uses the MacKay update rule introduced in \cite{Mackay}, obtained by setting the derivative \eqref{eq:cost_derivative} to zero and using a fixed-point equation. The update rule 
\begin{equation}\label{eq:alpha_update}
\alpha_m^{\text{new}} = \frac{||(\mu_x)_m+(\mu_x)_{m+M}||_2^2}{L\left(1-\frac{1}{\alpha_m^{\text{old}}}\left((\Sigma_x)_{m,m}+(\Sigma_x)_{m+M,m+M}\right)\right)} 
\end{equation}
is analogous to the results in~\cite{Tipp01, MSBL2007}.

For the estimation of the hyperparameter $\sigma^2$, the update rule introduced in \cite[eq~(27)]{SBL_DOA} is employed, as the definition of the hyperparameter $\alpha_m$ does not influence this update. Thus, the update rule becomes 
\begin{equation}\label{eq:update_sigma}
    \left(\sigma^2\right)^{\text{new}}=\frac{1}{N-K}\text{tr}\left((I_N - A_{\mathcal{M}} A_{\mathcal{M}}^+S_y)\right),
\end{equation}
where $S_y=Y Y^T/L$ represents the data sample covariance matrix. The matrix $A_\mathcal{M}=(a_0, a_{m_1},\ldots,a_{m_K}, a_{m_1+M},\ldots,a_{m_K+M})$ consists of column vectors of $A$ with the indices $m_1,\ldots,m_K$ corresponding to the $K$ largest values in $\alpha$. The Moore–Penrose inverse of $A_{\mathcal{M}}$ is denoted by  $A_{\mathcal{M}}^+$. The parameter $K<<M$ can be selected with model-order selection criteria, as described in \cite{SBL_DOA}. T he choice of $K$ did not significantly influence the results. 

\begin{algorithm}
\caption{Calculate the sparse prior}\label{alg:calc_prior}
\begin{algorithmic}[1]
\Statex \hspace{-\algorithmicindent} \textbf{Initialize:} here $\sigma_0^2=0.2, \alpha=\mathbf{1}_{M+1}, \epsilon_{\min}=1e-4$
\While{$\epsilon > \epsilon_{\min}$}
    \State $\mathbf{\gamma} = [\alpha_0,\alpha_1,\ldots,\alpha_M,\alpha_1,\ldots,\alpha_M]^T,\quad \Gamma = \mathop{\mathrm{diag}}(\gamma)$   
    \State Calculate $\mu_x, \Sigma_x$ \hfill with \eqref{eq:post_mu_sig}\;\;
    \State Update $\alpha_m$ \hfill with \eqref{eq:alpha_update}
    \State Update $\sigma^2$ \hfill with \eqref{eq:update_sigma}
    \State $\epsilon=||\alpha^{\mathrm{new}}-\alpha^{\mathrm{old}}||_1 /||\alpha^{\mathrm{old}}||_1$
\EndWhile
\end{algorithmic}
\end{algorithm}

\subsection{Minimizing labeling work}\label{ssec:min_label}
For each heart, there are slices with reliable volumes over time, whereas others have no information about the ventricle volumes. In these slices, practitioners rely on hand-labeled images. Labeling images by hand is tedious and expensive. The sparse priors obtained using the SBL approach introduced above minimize the labeling effort. Because the priors are sparse, only a few complex amplitudes $x_m+ix_{m+M}$ must be found.

Let $J$ be the index set $J\subset\{1,\ldots,N\}=\Omega$, and $P_J\in \mathbb{R}^{N\times N}$ be the projection matrix, projecting all rows $i\in \Omega \setminus J$ to 0. When limited labeled data $J$ exist, the model \eqref{eq:lin_model} reduces to 
\begin{equation*}
P_JY = P_JAX+N,
\end{equation*}
The posterior of $X$ is a Gaussian with mean $\mu_x$ and covariance $\Sigma_x$, similar to \eqref{eq:post_mu_sig},
\begin{align}
    \begin{split}
    \mu^{(J)}_x &= \Gamma A^TP_J^T \left(\Sigma^{(J)}_y\right)^{-1}Y_J\\
    \Sigma^{(J)}_x &= \left(\frac{1}{\sigma^2}A^TP_JA+\Gamma^{-1}\right)^{-1},
    \end{split}
\end{align}
where $\Sigma^{(J)}_y=\sigma^2I_{|J|}+P_J A\Gamma (P_JA)^T$ is the data covariance--this time with the projected matrices and $|J|$ labeled images. The prediction over the full time is obtained by multiplying the full matrix $A$ with the posterior and adding Gaussian noise $N$. A posterior prediction is obtained, normally distributed with mean $\mu_{post_y}^{(J)}$ and covariance $\Sigma_{post_y}^{(J)}$, given by 
\begin{equation}\label{eq:post_y_J}
    \begin{split}
        \mu_{post_y}^{(J)}&= A\mu_x^{(J)}\\
        \Sigma_{post_y}^{(J)}&= \sigma^2I_N + A\Sigma^{(J)}_xA^T.
    \end{split}
\end{equation}
The variance is a satisfactory measure of the distribution spread for a one-dimensional normal distribution . In this multivariate case, a few possible measures exist \cite{Measure_spread08}. The predominant considerations in the literature are the trace of the covariance matrix $\text{tr}(\Sigma_{post_y}^{(J)})$~\cite{dumbgen_tylers_1998,Visuri_trace03}, the determinant $\det (\Sigma_{post_y}^{(J)})$~\cite{duembgen05det,sal06det}, or the largest eigenvalue $\lambda_1(\Sigma_{post_y}^{(J)})$~ \cite{Randles00EV}. The trace measures the total variance but does not account for the correlation. Nonetheless, the trace  is computationally inexpensive. The determinant of the covariance matrix, as proposed in \cite{Measure_spread08}, offers an alternative but may encounter numerical problems, particularly as $\det(\Sigma)$ approaches 0 in high-dimensional settings with small eigenvalues. A third option is to employ the largest eigenvalue of $\Sigma$, corresponding to the variance along the first principal component. This approach  is also computationally more demanding than applying the trace of the covariance matrix.

This method results in the following optimization problem: Let $S:\mathcal{S}_N\rightarrow\mathbb{R}^+_0$ be a measure for the spread of a distribution, where $\mathcal{S}_N$ denotes the set of real-valued symmetric positive definite $N\times N$ matrices. For $k$ labeled images, we want to determine 
\begin{equation}\label{eq:opt_problem}
J^*:=\arg\min_{|J|=k} S \left(\Sigma_{post_y}^{(J)}\right)
\end{equation}
the indices $J^*$ of the images, which need to be labeled to minimize the spread. 

It might be desirable to let a practitioner iteratively label the next optimal image until practitioners are satisfied with the results. This approach leads  to a slightly different optimization objective, lending itself to applying greedy algorithms. For a given index set $J\subset\{1,\ldots,N\}$, find 
\begin{equation}\label{eq:greedy_optimum}
j^*:=\arg\min_{j\in\{1,\ldots,N\}} S \left(\Sigma_{post_y}^{(J\cup j)}\right).
\end{equation}
T he selected images $J^*$ and $j^*$ do not depend on the measurements at those slices. Thus, the indices can be computed before the labeling process. However, they strongly depend on the computed sparse frequencies. The procedure is presented in  Algorithm~\ref{alg:label_efficiency}.

\subsection{Optimality of the greedy algorithm}\label{ssec:optimality_greedy}
Finding the optimal solution of \eqref{eq:opt_problem} is difficult  due to the combinatorial explosion as $k$ increases. There are already more than 19 billion combinations to draw five images from $N=300$ time steps. Thus, iteratively solving \eqref{eq:greedy_optimum} using a greedy approach is a feasible alternative. The objective function is first reformulated to establish performance guarantees. Let $J\subseteq\Omega =\{1,\ldots , N\}$ and $\Sigma_y:= \sigma^2 I_N+A\Gamma A^T$ the data covariance, we define the objective function as follows: 
\begin{equation}
\begin{split}\label{eq:obj_mono_nondec}
f(J)=&\text{tr}\left(\Sigma_y\right)-\text{tr}\left(\Sigma_{post_y}^{(J)}\right)\\
=&\text{tr}(\Sigma_y)-\text{tr}\left(\sigma^2 I_N+A\left(\frac{1}{\sigma^2}A^TP_JA+\Gamma^{-1}\right)^{-1}A^T\right).
\end{split}
\end{equation} 
Maximizing this function is equivalent to minimizing $\text{tr}\left(\Sigma_{post_y}^{(J)}\right)$. For brevity, this work is limited to investigating the trace because this measure performed best in the experiments.  The following definitions are introduced:
\begin{defi}
Let $\Omega$ be a finite set and $2^{\Omega}$ be the power set of $\Omega$. A set function $f:2^{\Omega}\rightarrow \mathbb{R}$ is \underline {monotone non  decreasing} if $f(\setX)\leq f(\setY )$ for all $\setX\subseteq \setY \subseteq \Omega$.
\end{defi}
\begin{defi}
    The \underline{marginal gain} of adding element $j$ to set $\setX$ is defined as follows:
    \begin{equation}
    f_j(\setX):= f(\setX\cup {j})-f(\setX).
    \end{equation}
\end{defi}
\begin{defi}
A set function $f:2^{\Omega}\rightarrow \mathbb{R}$ is \underline {submodular} if
\begin{equation}
f_j(\setX)\geq f_j(\setY )
\end{equation}
for every subset $\setX\subseteq \setY \subset \Omega$ and every $j\in \Omega\setminus \setY $.
\end{defi}

The authors of \cite{nemhauser_analysis_1978} demonstrated  that the greedy algorithm generates a solution at least $(1-1/e)$ of the optimal solution for submodular set functions. The objective function \eqref{eq:obj_mono_nondec} is not submodular, as Figure~\ref{fig:wsc_hist} presents counterexamples. However, a weaker bound exists for \eqref{eq:obj_mono_nondec}. This work follows the definition from \cite{hashemi19a}.
\begin{defi}\label{defi:wsc}
The \underline {weak-submodularity constant} $c_f$ of a monotone non decreasing  function $f$ is defined as follows: 
    \begin{equation}
        c_f := \max_{(\setX ,\setY ,i)\in \tilde{\Omega}} f_i(\setY)/f_i(\setX),
    \end{equation}
    where $\tilde{\Omega}=\{(\setX , \setY , i)|\setX \subseteq \setY \subset \Omega, i\in \Omega\setminus\setY\}.$
\end{defi}
Note, that $f$ is submodular if and only if $c_f\leq1$ and is weakly submodular if and only if $c_f$ is bounded.\\
For $J\subset \Omega$ and $i\in \Omega$ the marginal gain of the objective function \eqref{eq:obj_mono_nondec} simplifies to 
\begin{equation}
    \begin{split}
        f_i(J)=& f(J\cup \{i\}) - f(J)\\
        =& \text{tr}\left(\Sigma_{post_y}^{(J)}\right)-\text{tr}\left(\Sigma_{post_y}^{(J\cup \{i\})}\right).
    \end{split}
\end{equation}
Moreover, we can prove the following lemma.
\begin{lem}\label{lem:proof}
For $f_i$ defined in \eqref{eq:obj_mono_nondec}, it holds $f_i(J)>0$ for all $J\subset \Omega$ and $i\in \Omega\setminus J$.
\end{lem}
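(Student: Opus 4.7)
The plan is to show that moving from $J$ to $J\cup\{i\}$ corresponds to a rank-one update of $(\Sigma_x^{(J)})^{-1}$, apply the Sherman–Morrison formula, and then verify strict positivity of the resulting trace.

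First I would denote by $a'_i := A^T e_i \in \mathbb{R}^{2M+1}$ the $i$th row of $A$ written as a column vector. Since $P_{J\cup\{i\}} = P_J + e_i e_i^T$, a direct computation gives
\begin{equation*}
\bigl(\Sigma_x^{(J\cup\{i\})}\bigr)^{-1} = \bigl(\Sigma_x^{(J)}\bigr)^{-1} + \tfrac{1}{\sigma^2}\, a'_i (a'_i)^T .
\end{equation*}
Because $\Gamma^{-1}$ (strictly positive diagonal) is part of the definition, $\Sigma_x^{(J)}$ is positive definite for every $J\subseteq\Omega$, so Sherman–Morrison applies and yields
\begin{equation*}
\Sigma_x^{(J)} - \Sigma_x^{(J\cup\{i\})} = \frac{\tfrac{1}{\sigma^2}\,\Sigma_x^{(J)} a'_i (a'_i)^T \Sigma_x^{(J)}}{1+\tfrac{1}{\sigma^2}\,(a'_i)^T \Sigma_x^{(J)} a'_i} .
\end{equation*}

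Next I would substitute this identity into the marginal gain. Since $\Sigma_{post_y}^{(J)} = \sigma^2 I_N + A\Sigma_x^{(J)} A^T$, the constant $\sigma^2 I_N$ cancels and, setting $v := A\Sigma_x^{(J)} a'_i \in \mathbb{R}^N$,
\begin{equation*}
f_i(J) = \text{tr}\!\left(A\bigl(\Sigma_x^{(J)}-\Sigma_x^{(J\cup\{i\})}\bigr)A^T\right) = \frac{\tfrac{1}{\sigma^2}\,\|v\|_2^2}{1+\tfrac{1}{\sigma^2}\,(a'_i)^T \Sigma_x^{(J)} a'_i}.
\end{equation*}
The denominator is strictly positive because $\Sigma_x^{(J)}\succ 0$, so it remains to show $\|v\|_2^2>0$.

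The key observation is that the $i$th coordinate of $v$ equals $v_i = e_i^T A\Sigma_x^{(J)} a'_i = (a'_i)^T \Sigma_x^{(J)} a'_i$. The vector $a'_i$ is nonzero — its first entry is $1$, by construction of $A=[\mathbf{1}_N\ \mathrm{Re}(B)\ \mathrm{Im}(B)]$ — and $\Sigma_x^{(J)}$ is positive definite, hence $v_i>0$, which forces $v\neq 0$ and $\|v\|_2^2>0$. Therefore $f_i(J)>0$, which is the claim. The only subtle point (and thus the main thing to be careful about) is exactly this nonvanishing-of-$v$ argument: a naive bound would only give $f_i(J)\ge 0$ from positive semidefiniteness; strict positivity requires using that the diagonal entry $(a'_i)^T\Sigma_x^{(J)}a'_i$ of $A\Sigma_x^{(J)}A^T$ is strictly positive, which pins $v$ away from zero regardless of any nontrivial kernel of $A$.
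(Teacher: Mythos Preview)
Your proof is correct and follows essentially the same route as the paper: both identify the passage from $J$ to $J\cup\{i\}$ as a rank-one update of $(\Sigma_x^{(J)})^{-1}$, apply Sherman--Morrison, and reduce $f_i(J)$ to a ratio whose denominator is positive and whose numerator is $\tfrac{1}{\sigma^2}\|A\Sigma_x^{(J)}A^Te_i\|_2^2$. Your strict-positivity step (reading off the $i$th coordinate $v_i=(a'_i)^T\Sigma_x^{(J)}a'_i>0$) is in fact slightly cleaner than the paper's phrasing, but the idea---$a'_i\neq 0$ because the first column of $A$ is $\mathbf{1}_N$---is the same.
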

\begin{proof}
If $J\subset \Omega$ and $i\in \Omega\setminus J$, it holds that
\begin{equation}\label{eq:proof_mono}
\begin{split}
        f_i(J)=&\text{tr}\left(\Sigma_{post_y}^{(J)}\right)-\text{tr}\left(\Sigma_{post_y}^{(J\cup \{i\})}\right)\\
        =& \text{tr}\left(A\left(\frac{1}{\sigma^2}A^TP_{J}A+\Gamma^{-1}\right)^{-1}A^T\right) - \text{tr}\left(A\left(\frac{1}{\sigma^2}A^TP_{J\cup \{i\}}A+\Gamma^{-1}\right)^{-1}A^T\right)\\
        =& \text{tr}\left(A\left(\left(\frac{1}{\sigma^2}A^TP_{J}A+\Gamma^{-1}\right)^{-1}-\left(\frac{1}{\sigma^2}A^TP_{J\cup \{i\}}A+\Gamma^{-1}\right)^{-1}\right)A^T\right).\\
\end{split}
\end{equation}
This work defines $D:=1/\sigma^2 A^TP_{J}A+\Gamma^{-1}$, which is symmetric positive definite because $\Gamma$ is a diagonal matrix with positive diagonal entries. For $D$ and $e_i$, the standard basis vector in the $i$th direction, note that 
\begin{equation}\label{eq:proof_mono1}
    \begin{split}
    \frac{1}{\sigma^2}A^TP_{J\cup \{i\}}A+\Gamma^{-1} =& D+\frac{1}{\sigma^2}A^T P_{\{i\} A}\\
    =&D+A^Te_i e_i^TA
    \end{split}
\end{equation}
With this formulation \eqref{eq:proof_mono1} and the Sherman--Morrison formula \cite{Barlett_sherman_morrison}, \eqref{eq:proof_mono} can be written as follows:
\begin{equation}\label{eq:lemma_proof}
    \begin{split}
        f_i(J)=&\text{tr}\left(A \left(D^{-1}-\left(D+A^Te_i e_i^TA\right)^{-1}\right) A^T\right)\\
        \underset{S.-M.}{=}&\text{tr}\left(A\left((1+\frac{1}{\sigma^2}e_i^TAD^{-1}A^Te_i)^{-1}\frac{1}{\sigma^2}D^{-1}A^Te_i e_i^T AD^{-1}\right)A^T\right)\\
        =&\left(1+\frac{1}{\sigma^2}e_i^TAD^{-1}A^Te_i\right)^{-1}\left(\frac{1}{\sigma^2} e_i^T AD^{-1}A^TAD^{-1}A^Te_i\right),
    \end{split}
\end{equation}
where $\langle\cdot,\cdot\rangle_D$ can be written for the weighted inner product because $D$ is a symmetric positive definite matrix. The marginal gain in \eqref{eq:lemma_proof} simplifies to  
\begin{equation}
    \begin{split}
        f_i(J)=&\left(1+\frac{1}{\sigma^2}e_i^TAD^{-1}A^Te_i\right)^{-1}\left(\frac{1}{\sigma^2} e_i^T AD^{-1}A^TAD^{-1}A^Te_i\right)\\
        =&\frac{1}{\sigma^2}\langle AD^{-1}A^Te_i,AD^{-1}A^Te_i\rangle /\left(1+\frac{1}{\sigma^2}\langle A^Te_i,A^Te_i\rangle_{D^{-1}}\right)\\
        >&0.
    \end{split}
\end{equation}
This result holds because the numerator is zero if and only if a zero row $i$ exists in $A$, but by construction, the first entry of each row is 1. %
\end{proof}
\begin{thm}
The objective function $f(J)$, as defined in \eqref{eq:obj_mono_nondec}, is a weakly submodular set function.
\end{thm}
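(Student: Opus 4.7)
The plan is to verify both parts of Definition~\ref{defi:wsc}: (i) $f$ is monotone nondecreasing, and (ii) $c_f$ is finite. Both properties will follow almost immediately from Lemma~\ref{lem:proof}, once one recognizes that weak submodularity in the sense used here requires only \emph{finiteness} of $c_f$, not a specific quantitative bound. The core observation is a simple combinatorial compactness argument: the relevant maximum is taken over a finite set of strictly positive ratios.

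For monotonicity, I would take $\setX \subseteq \setY \subseteq \Omega$, enumerate $\setY \setminus \setX = \{i_1, \ldots, i_k\}$, and telescope
\[
f(\setY) - f(\setX) = \sum_{j=1}^{k} f_{i_j}\bigl(\setX \cup \{i_1, \ldots, i_{j-1}\}\bigr).
\]
Each summand is strictly positive by Lemma~\ref{lem:proof}, so $f$ is monotone nondecreasing (in fact strictly increasing under proper inclusion).

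For finiteness of $c_f$, note that since $\Omega = \{1, \ldots, N\}$ is finite, the triple set $\tilde{\Omega} \subseteq 2^{\Omega} \times 2^{\Omega} \times \Omega$ is finite as well. For every $(\setX, \setY, i) \in \tilde{\Omega}$, the requirement $i \in \Omega \setminus \setY \subseteq \Omega \setminus \setX$ combined with Lemma~\ref{lem:proof} yields $f_i(\setX) > 0$, and the closed-form expression derived in \eqref{eq:lemma_proof} shows $f_i(\setY)$ is likewise a finite positive number. The ratio $f_i(\setY)/f_i(\setX)$ is therefore a well-defined finite positive real, and the maximum over the finite set $\tilde{\Omega}$ is also finite. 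Hence $c_f < \infty$, which is precisely the definition of weak submodularity.

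The only real obstacle is securing strict positivity of $f_i(\setX)$ to rule out division by zero, which is exactly what Lemma~\ref{lem:proof} delivers via the observation that the first column of $A$ is the all-ones vector. Producing an \emph{explicit} quantitative estimate on $c_f$ in terms of $\sigma^2$, $\Gamma$, and the spectrum of $A$ would require more substantial work, starting from the closed-form in \eqref{eq:lemma_proof} and bounding the Rayleigh quotients that appear there, but such an estimate is unnecessary for the weak submodularity claim as stated.
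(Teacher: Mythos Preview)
Your proposal is correct and follows essentially the same approach as the paper: both arguments use Lemma~\ref{lem:proof} to obtain monotonicity and strict positivity of every marginal gain, then invoke finiteness of $\Omega$ (hence of $\tilde{\Omega}$) to conclude that the maximum defining $c_f$ is attained and finite. The paper packages the last step slightly differently by bounding $c_f \le c_{\max}/c_{\min}$ with $c_{\max}$ and $c_{\min}$ the extreme marginal gains over all admissible pairs, but this is the same finiteness-plus-positivity argument you give.
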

\begin{proof}
From Lemma~\ref{lem:proof}, it follows directly that $f(J)$ is a monotone non-decreasing function. It is left to be shown, that the weak-submodularity constant $c_f$ is bounded.\\
As $\Omega$ is a finite set, there exists 
\begin{equation}
    c_{\text max}:=\max_{\setX\subset\Omega, i\in \Omega\setminus\setX}f_i(\setX)
\end{equation}
and 
\begin{equation}
    c_{\text min}:=\min_{\setX\subset\Omega, i\in \Omega\setminus\setX}f_i(\setX).
\end{equation}
We know from Lemma~\ref{lem:proof}, that $0<c_{\text min}\leq c_{\text max}$; thus, $c_b=c_{\text max}/c_{\text min}$ exists, and the weak-submodularity constant $c_f$ is bounded with 
\begin{equation}
    c_f = \max_{(\setX ,\setY ,i)\in \tilde{\Omega}} f_i(\setY)/f_i(\setX)\leq c_b,
\end{equation}
for $\tilde{\Omega}=\{(\setX , \setY , i)|\setX \subseteq \setY \subset \Omega, i\in \Omega\setminus\setY\}$.
\end{proof}
Based on \cite{kempe_weak_submodular11, hashemi19a}, the solution obtained by the greedy algorithm is guaranteed to be within a factor of $(1-e^{-1/c_f})$ of the optimal solution $f(J^*).$ Section~\ref{sec:results} empirically examines the weak-submodularity constant. 

\section{Application}\label{sec:application}
This application involves real-time MRI scans of univentricular hearts, with videos of $N=300$ images in each slice through the heart. Although the automatic segmentation in the intermediate slices is very reliable, it contains no information on the outer slices. This work is interested in the ventricle volume for each slice over time. Combining the results of Sections~\ref{ssec:find_freqs} and \ref{ssec:min_label} can minimize the labeling effort in the outer slices. This work builds a linear model \eqref{eq:lin_model} to describe the volume over the slices and time. Section~\ref{ssec:find_freqs} employs the SBL algorithm to determine the prior empirically. The SBL algorithm ensures that the prior is sparse. Thus, the prior for many frequencies is close to a delta distribution around 0. The distributions are normal with variances of $\alpha_m\approx 0$. Moreover, $\alpha_m$ can be set to 0 for small $\alpha_m$, and the corresponding columns can be removed from $A$ and rows removed from $X$ to circumvent numerical problems and reduce the computational expense of the problem. A practical method to set a threshold is $\epsilon_{thresh} = 0.01 \cdot \max_{m\in \{1,\ldots, M\}}(\alpha_m)$, as frequencies with amplitudes orders of magnitudes smaller than the most significant frequency have a negligible effect on the final volume.

A prior that assigns weight only to a few frequencies is beneficial when predicting slices without information about the volume because only the phase and amplitude of the most influential frequencies must be fit. Thus, only a few labeled images suffice to predict a whole slice. The sequence of images can be optimized for labeling by solving \eqref{eq:opt_problem} for a fixed number of images. This problem is difficult  due to the combinatorial explosion for larger numbers. Practically, \eqref{eq:greedy_optimum} can be iteratively solved, and the best slice can be consecutively labeled until the practitioner is satisfied with the results. Section~\ref{ssec:optimality_greedy} proves that the solution of this approach is at most a factor of $(1-e^{-1/c_f})$ away from the optimal solution to \eqref{eq:opt_problem}.

\begin{algorithm}
\caption{Efficient labeling and prediction}\label{alg:label_efficiency}
\begin{algorithmic}[1]
\Statex \hspace{-\algorithmicindent} \textbf{Require:} $\Gamma = \mathop{\mathrm{diag}}([\alpha_0,\alpha_1,\ldots,\alpha_M,\alpha_1,\ldots,\alpha_M])$ and $\sigma^2$ from Algorithm \ref{alg:calc_prior} 
\Statex \hspace{-\algorithmicindent} \textbf{Initialize:} $J=\{\,\}$
\While{insufficient prediction}
\State $j^*:=\arg\min_{j\in\{1,\ldots,N\}} S \left(\Sigma_{post_y}^{(J\cup j)}\right)$ \hfill with $S$ from \eqref{eq:spread_measure}
\State Manually label $j^*$
\State $J=J\cup j^*$
\State Calculate posterior        $\mu_{post_y}^{(J)}$, $\Sigma_{post_y}^{(J)}$  \hfill with \eqref{eq:post_y_J} 
\EndWhile
\end{algorithmic}
\end{algorithm}

\section{Results}\label{sec:results}
The MRI scans are from patients with univentricular hearts. This condition  makes automatic segmentation incredibly challenging because their hearts usually differ significantly from the geometry of the hearts of other patients. Therefore, it is not possible to use publicly available datasets. The data contain  MRI scans of two patients. The inner five slices for Heart~1 and the inner four slices for Heart~2 can be segmented reliably using a U-Net approach similar to \cite{isensee_nnu-net_2021}. A medical expert confirmed the segmentation for those slices. The outer slices could not be segmented reliably, with mostly zero predictions. 
\begin{figure}[h]
    \centering
        \centering
        \input{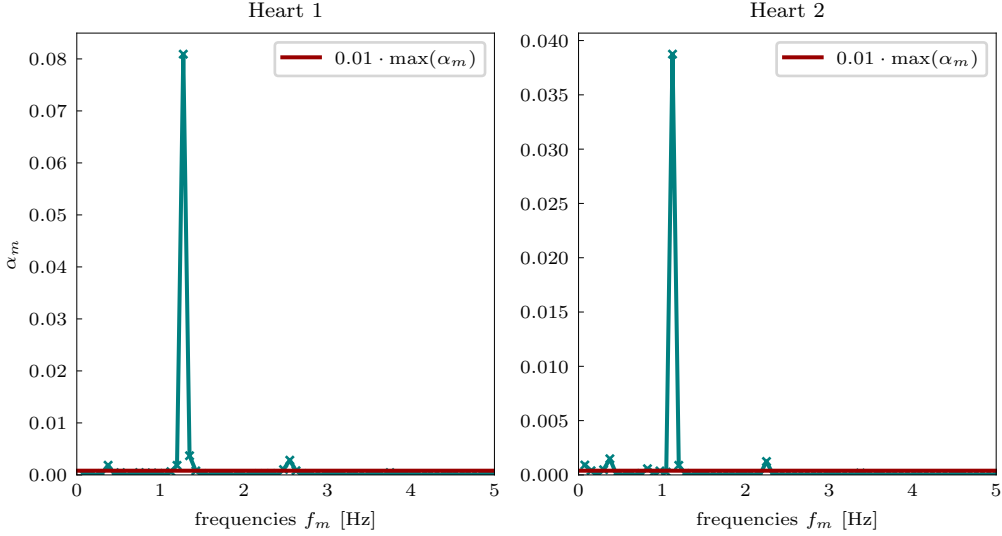}
    \caption{Values for $\alpha_m$, where $m>0$, after the expectation maximization procedure of the SBL algorithm. The largest peak is at the heart rate, with smaller peaks at double the heart rate. The respiratory rate is about $0.25$~Hz.}
    \label{fig:alpha_m}
\end{figure}

\subsection{Finding sparse frequencies}
The frequencies $f_m=3m/40$~Hz for $m=0,\ldots100$ were employed to cover a broad range of possible frequencies. Although one of the ventricle volumes (Heart~1) is visibly more irregular than the other, the SBL algorithm reliably found sparse frequencies for both hearts. Figure~\ref{fig:alpha_m} illustrates  the values for $\alpha_m$, $m>0$, after the EM procedure of the SBL algorithm. For both hearts, the algorithm found seven frequencies higher than the threshold $0.01\cdot\max_{m>0}(\alpha_m)$. By far, the most significant influence is the heart frequency. Minor influences include the respiratory rate at about $0.3$~Hz and double the heart rate (i.e., , higher harmonics). Figure~\ref{fig:posterior} depicts the posterior of the slices. The broader posterior for Heart~1 captures the more irregular heartbeat.
\begin{figure}[h!]
    \centering
    \begin{subfigure}{0.95\textwidth}
        \centering
        \includegraphics[width=\textwidth]{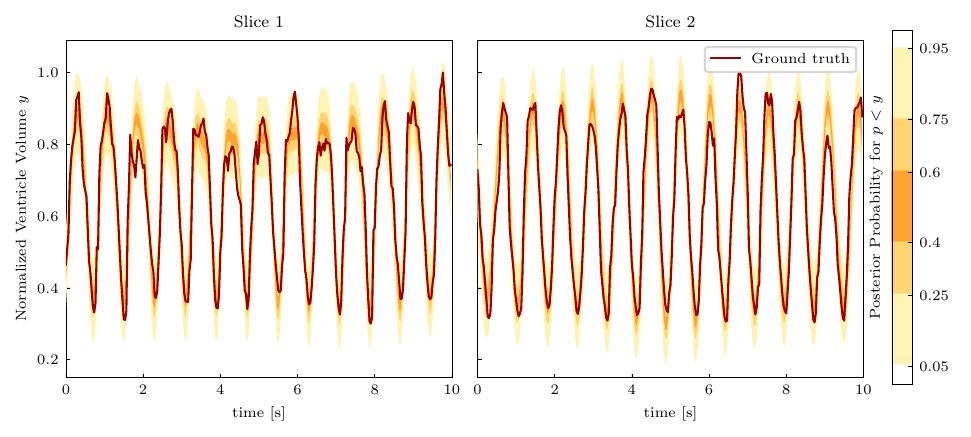}
        \caption{Heart 1}
    \end{subfigure}
    \\
    \begin{subfigure}{0.95\textwidth}
        \centering
        \includegraphics[width=\textwidth]{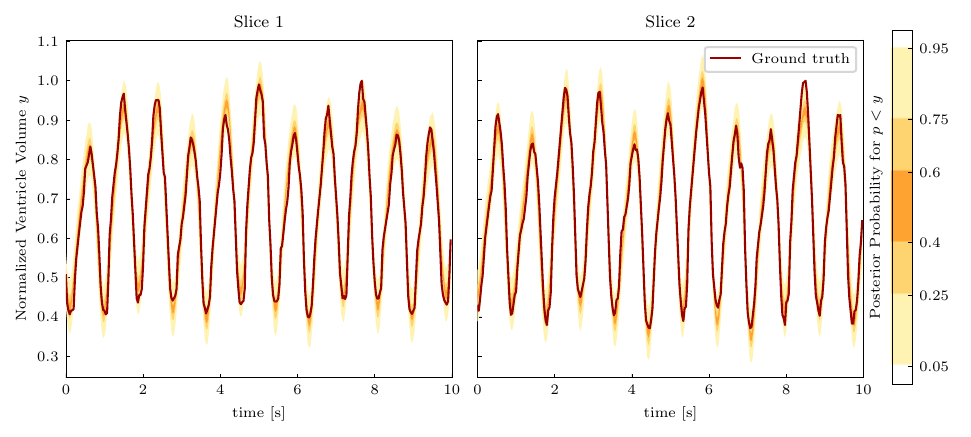}
        \caption{Heart 2}
    \end{subfigure}
    \caption{Posterior distribution and ground-truth data for two slices of two hearts. (a) Heart 1 is much less regular. The posterior distribution reflects this, with a broader posterior.}
    \label{fig:posterior}
\end{figure}%

\subsection{Minimizing labeling work}
The labeling effort on the other slices can be minimized after empirically determining the prior on the good intermediate slices. We used a greedy algorithm to search for the best next point to label. As objective functions, this work used the following: 
\begin{align}\label{eq:spread_measure}
    \begin{split}
        S\left(\Sigma_{post_y}^{(J)}\right) &= \text{tr}\left(\Sigma_{post_y}^{(J)}\right)\\
        S\left(\Sigma_{post_y}^{(J)}\right) &= \sqrt[N]{\det \left(\Sigma_{post_y}^{(J)}\right)}\\
        S\left(\Sigma_{post_y}^{(J)}\right) &= \lambda_1\left(\Sigma_{post_y}^{(J)}\right).
    \end{split}
\end{align}
The eigenvalues of the matrix became too small for a numerical calculation of $\det\left(\Sigma_{post_y}^{(J)}\right)$, which yields a value of 0. Therefore, the property that the determinant is the product of the eigenvalues was applied to compute the following:
\begin{equation}
            S\left(\Sigma_{post_y}^{(J)}\right) = \sqrt[N]{\det \left(\Sigma_{post_y}^{(J)}\right)} = \prod_{i=1}^N \sqrt[N]{\lambda_i \left(\Sigma_{post_y}^{(J)}\right)}.
\end{equation}
Section~\ref{ssec:optimality_greedy} demonstrates that, for the trace $\text{tr} \left(\Sigma_{post_y}^{(J)}\right)$, the solution obtained by the greedy algorithm is guaranteed to be within a factor of $(1-e^{-1/c_f})$ of the optimal solution, where $c_f$ denotes the weak-submodularity constant 
\begin{equation}
    c_f := \max_{(\setX ,\setY ,i)\in \tilde{\Omega}} f_i(\setY)/f_i(\setX),
\end{equation}
with $\tilde{\Omega}=\{(\setX , \setY , i)|\setX \subseteq \setY \subset \Omega, i\in \Omega\setminus\setY\}$ for $\Omega=\{1,\ldots,300\}$. Next, 30,000 samples were uniformly drawn for $i\in \Omega$, $\setY \in \Omega\setminus\{i\}$ and $\setX\subset\setY$ to estimate $c_f$, and the following is calculated:
\begin{equation}\label{eq:estimated_wsc}
    c_{(\setX ,\setY ,i)}:=f_i(\setY)/f_i(\setX)
\end{equation}
The highest value was $\approx2.52$, whereas only $1.2\%$ of the samples were greater than 1 (see Figure~\ref{fig:wsc_hist}). The values greater than 1 prove that the function is not a submodular set function, but the moderate value could suggest that $c_f$ is within this order of magnitude.

\begin{figure}[h]
    \centering
        \centering
        \includegraphics[width=0.7\textwidth]{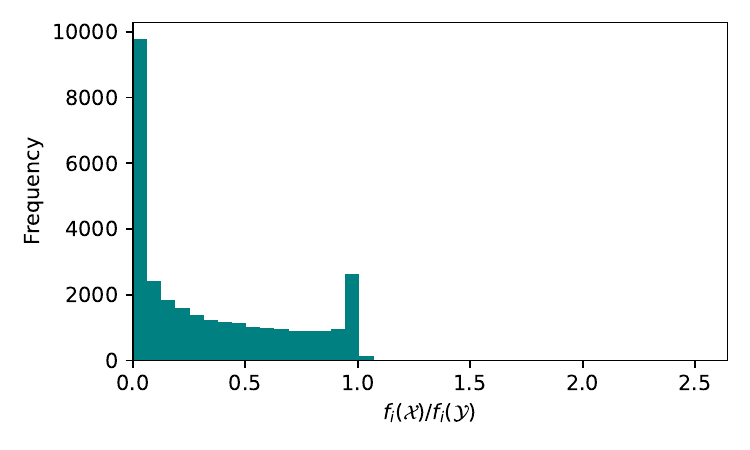}
    \caption{Histogram of 30,000 random draws of $c_{(\setX ,\setY ,i)}$, as defined in \eqref{eq:estimated_wsc}, to estimate the weak-submodularity constant. The highest obtained value is $\approx2.52$,  and only $1.2\%$ of the samples are greater than 1.}
    \label{fig:wsc_hist}
\end{figure}%
We used jackknife resampling to evaluate the performance of the approach on the patient data because additional thoroughly segmented slices from the respective hearts were unavailable. Thus, slices were left out to determine the prior with sampling from the left-out slice.  Despite the presence of visual obstructions, such as valves that may appear differently in certain heart diseases and pose challenges for neural network approaches to segment the images accurately, the most significant frequencies affecting the volume over time, such as respiratory and heart rates, are expected to remain consistent in those slices. Hence, the posterior predictions can be compared to the segmented images as ground truth.

\begin{figure*}[ht]
        \centering

        \begin{subfigure}[b]{0.49\textwidth}
            \centering
            \includegraphics[width=\textwidth]{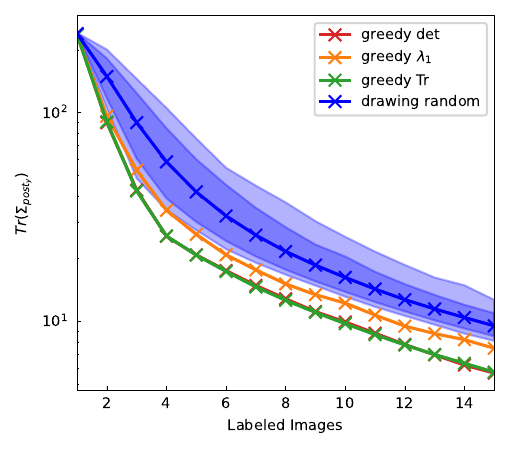}
            \caption{Trace Heart 1}%
            \label{fig:trace_heart_1}
        \end{subfigure}
        \hfill
        \begin{subfigure}[b]{0.49\textwidth}
            \centering
            \includegraphics[width=\textwidth]{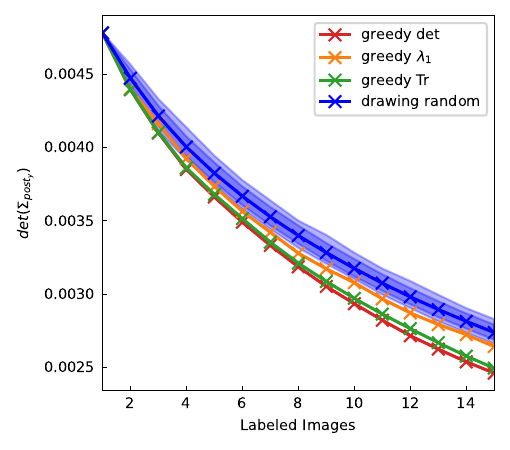}
            \caption{Determinant Heart 1}%
            \label{fig:det_heart_2}
        \end{subfigure}
        \begin{subfigure}[b]{0.49\textwidth}
            \centering
            \includegraphics[width=\textwidth]{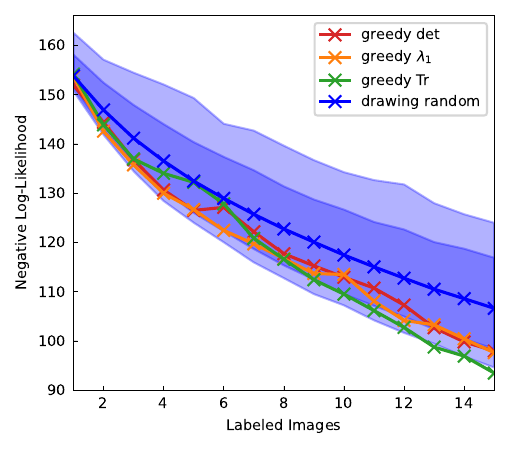}
            \caption{nll Heart 1}%
            \label{fig:nll_heart_1}
        \end{subfigure}
        \hfill
        \begin{subfigure}[b]{0.49\textwidth}  
            \centering 
            \includegraphics[width=\textwidth]{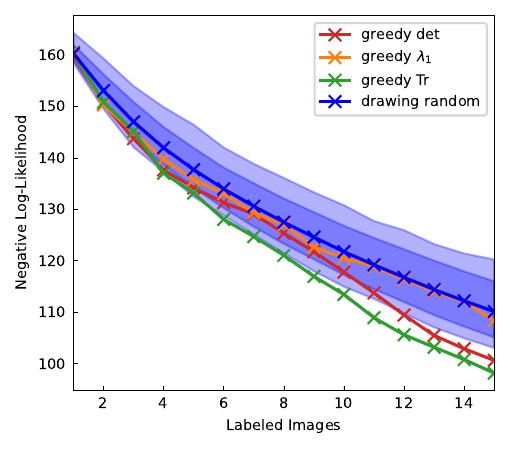}
            \caption{nll Heart 2}%
            \label{fig:nll_heart_2}
        \end{subfigure}

    \caption{Results of optimizing the posterior covariance. (a, b) Resulting covariance of the greedy approaches with 10,000 random draws. Shaded areas display the range of the random draws, where darker blue marks the inner 90\%,  and the blue curve represents the mean for (a) covariance trace, (b) determinant, and (c, d) resulting negative log-likelihood on Hearts~1 and 2. The wider band on Heart 1 reflects the higher irregularity in the heartbeat in this heart.}\label{fig:cov_and_nll}
\end{figure*}%
All three objective functions aim to reduce the posterior distribution spread. Figure~\ref{fig:trace_heart_1} and~\ref{fig:det_heart_2} demonstrate that each objective function produces narrower posterior distributions than  random sampling, suggesting that the greedy algorithm is a sensible approach.  Appendix~\ref{sec:appendix} compares all measures on both hearts.  The posterior distributions found by minimizing the trace and determinant of the posterior covariance matrix perform similarly, minimizing the largest eigenvalue yields a wider posterior. The first eigenvalue seems insufficient to describe the full spread of the posterior. More critical than the spread of the posterior distribution is for the posterior to represent the unknown data accurately. The jackknife sampling method enables a comparison of the posterior distributions with the ground-truth data, as illustrated  in Figure~\ref{fig:nll_heart_1} and \ref{fig:nll_heart_2} . The negative log-likelihood of the left-out slices using the greedy approach is compared with that of 10,000 random samples. 

Among the three objective functions, $S\left(\Sigma_{post_y}^{(J)}\right) = \text{tr}\left(\Sigma_{post_y}^{(J)}\right)$ achieves the best results, as it is the only method to surpass random sampling across all samples with 15 labeled images on both hearts. The determinant yields slightly worse results but still outperforms all random samples for more than 12 labeled images in Heart~2 and over 95\% of the random samples in Heart~1 for 14 and 15 labeled images. Using the largest eigenvalue as an objective function yields the worst results, barely outperforming the average random sampling approach on Heart~2. This finding  is also consistent with the observation that minimizing the largest eigenvalue fails to minimize the spread of the posterior to the same degree as the other two objectives.

Overall, the most important benefit of using these greedy algorithms is preventing the random selection of inefficient sequences, especially in the more irregular heart (Heart~1).  The broader negative log-likelihood range of the random samples indicates the importance of an informed selection of labeled images. Drawing seven images according to the trace of the posterior distribution outperforms the worst random sample with 15 labeled images. For Heart~2, only nine labeled images are needed to outperform the worst random sample with 15 images. The importance of avoiding poor sequences can also be visually observed. The posterior distribution captures the ground truth for only five sample points selected with the greedy trace approach (see Figure~\ref{fig:samples_1} and~\ref{fig:samples_2}). In contrast , when selecting the sample images with a greedy worst approach, the posterior does not fit the data. However, the model also represents this uncertainty in a wide posterior distribution, providing a clear warning signal for the practitioner not to trust the predictions (Figure~\ref{fig:greedy_worst}).

\begin{figure*}[ht]
        \centering
        \begin{subfigure}[b]{0.49\textwidth}   
            \centering 
            \includegraphics[width=\textwidth]{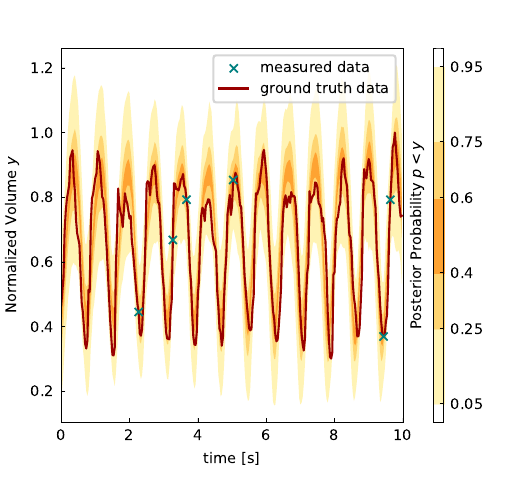}
            \caption{Five labeled images (Heart 1)}%
            \label{fig:samples_1}
        \end{subfigure}
        \hfill
        \begin{subfigure}[b]{0.49\textwidth}   
            \centering 
            \includegraphics[width=\textwidth]{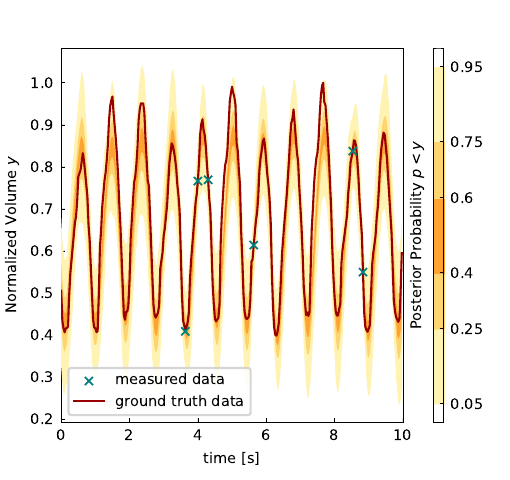}
            \caption{Five  labeled images (Heart 2)} %
            \label{fig:samples_2}
        \end{subfigure}
        \begin{subfigure}[b]{0.49\textwidth}  
            \centering 
            \includegraphics[width=\textwidth]{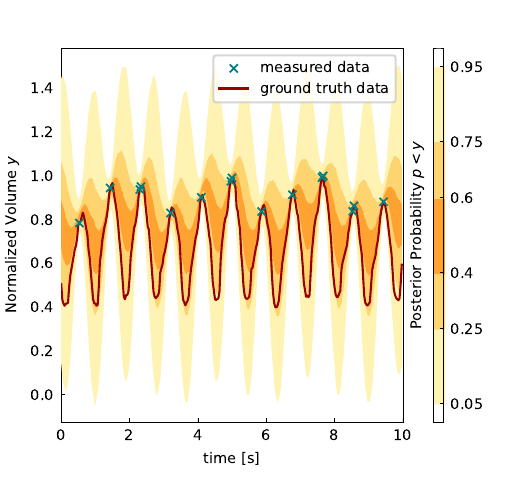}
            \caption{Greedy worst}%
            \label{fig:greedy_worst}
        \end{subfigure}
        \caption{Predictive posterior for measured data depicted in the shaded yellow area for labeled images from unknown ground-truth data . The measured data is marked with a green x. (c) Results  of 15 labeled images selected by a greedy worst approach on Heart 2. Compared to the result of five selected images by a greedy algorithm minimizing the trace on Heart 1 (a) or Heart 2 (b), the posterior variance is significantly smaller in (a) and (b) with fewer labeled images. The posterior also corresponds more accurately to the ground-truth data in those examples.} 
        \label{fig:good_vs_bad_samples}
    \end{figure*}%

\section{Conclusion}\label{sec:conclusion}
The SBL algorithm identifies the dominant frequencies in ventricle volumes over time from real-time MRI scans. Sparse frequency priors can efficiently label slices where automatic segmentation is inadequate. A greedy approach can be applied to select the optimal image for manual segmentation to minimize an objective function related to the posterior covariance spread. In this application, using the trace of the posterior covariance as an objective function is the least computationally expensive  and yields the best results. Performance can be guaranteed for the greedy algorithm on the trace of the posterior covariance. The greedy approach stays within a factor of $(1-e^{-1/c_f})$ of the optimal solution. Empirically, the highest value  was $\approx2.5$, suggesting that $c_f$ might also be in this order of magnitude. Testing this approach on real-life patient data demonstrated  that the selection procedure minimized the posterior covariance spread. Moreover, the resulting models also accurately predicted the unknown data. Specifically, the model using 15 labeled images selected based on the trace of the posterior covariance consistently outperformed all models using randomly selected images, achieving the lowest negative log-likelihood for Hearts~1 and 2  across 10,000 random samples.

\section*{Acknowledgments}
This work was performed  as  part  of  the  Helmholtz School for Data Science in Life, Earth and Energy (HDS-LEE) and received funding from the Helmholtz Association of German Research Centres. This publication is partly supported by the Alexander von Humboldt Foundation and the King Abdullah University of Science and Technology (KAUST) Office of Sponsored Research (OSR) under Award No. OSR-2019-CRG8-4033. \\
We would like to thank the whole Hypofon study team: Nicole Müller, Julian Alexander Härtel, Jan Schmitz, Ute Baur, Melanie von der Wiesche, Iris Rieger, Jon von Stritzky, Christopher Hart, Janina Bros, Benedikt Seeger, Emily Zollmann, Marijke Grau, Boris Dragutinovic, Laura-Maria de Boni, Jan-Niklas Hönemann, Wilhelm Bloch, Daniel Aeschbach, Eva-Maria Elmenhorst, Ulrike Herberg, Alena Hess, Moritz Schumann, Tobias Kratz, Jens Jordan and Johannes Breuer and the particpants for the unique study data and Stiftung KinderHerz for making this study possible. 

\FloatBarrier
\begin{appendices}

\section{Supplementary Figures}\label{sec:appendix}
This work measures the posterior covariance spread with the trace, determinant, and largest eigenvalue of the posterior covariance matrix (see \eqref{eq:spread_measure}). The greedy algorithm determined which images to label on the three mentioned measures. Figure~\ref{fig:appendix_all_measures} compares the performance of the selection schemes on both hearts for all three measures. 
\begin{figure*}[ht]
        \centering

        \begin{subfigure}[b]{0.45\textwidth}
            \centering
            \includegraphics[width=\textwidth]{figures/trace_heart_1.pdf}
            \caption{Trace Heart 1}%
        \end{subfigure}
        \hfill
        \begin{subfigure}[b]{0.45\textwidth}
            \centering
            \includegraphics[width=\textwidth]{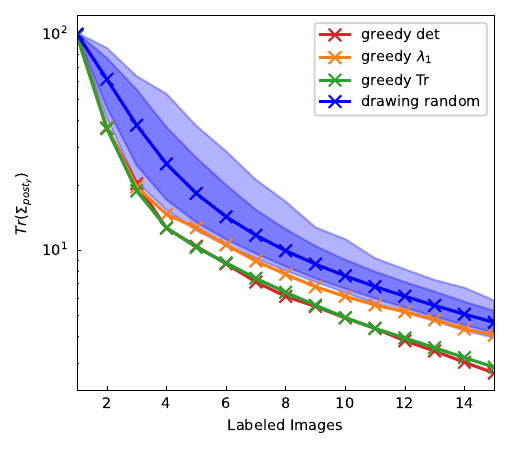}
            \caption{Trace Heart 2}%
        \end{subfigure}
        \begin{subfigure}[b]{0.45\textwidth}
            \centering
            \includegraphics[width=\textwidth]{figures/det_heart_1.pdf}
            \caption{Determinant Heart 1}%
        \end{subfigure}
        \hfill
        \begin{subfigure}[b]{0.45\textwidth}
            \centering
            \includegraphics[width=\textwidth]{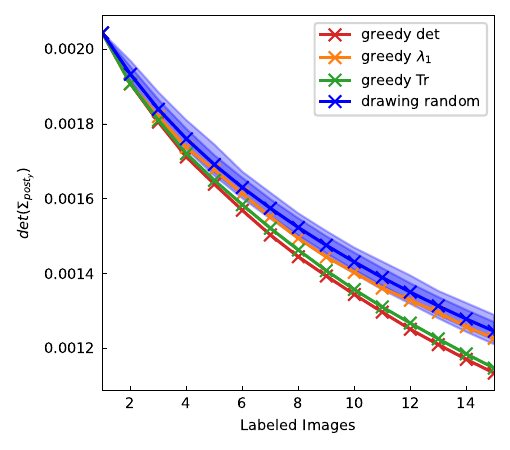}
            \caption{Determinant Heart 2}%
        \end{subfigure}
        \begin{subfigure}[b]{0.45\textwidth}
            \centering
            \includegraphics[width=\textwidth]{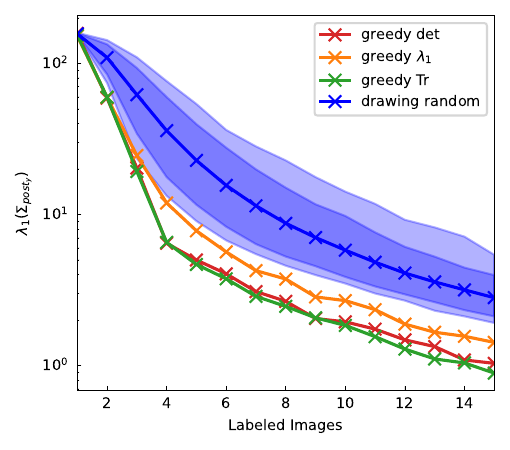}
            \caption{Largest eigenvalue Heart 2}%
        \end{subfigure}
        \hfill
        \begin{subfigure}[b]{0.45\textwidth}
            \centering
            \includegraphics[width=\textwidth]{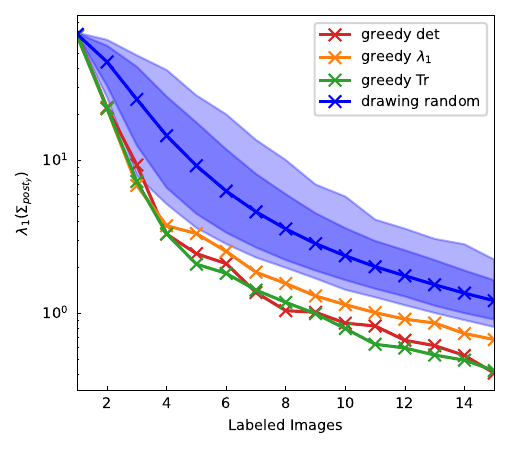}
            \caption{Largest Eigenvalue Heart 2}%
        \end{subfigure}
        \caption{Results of optimizing the posterior covariance, comparing the covariance of the greedy approaches with 10,000 random draws. Shaded areas depict  the range of random draws, where darker blue marks the inner 90\%,  and the blue curve is the mean. For comparison, the largest eigenvalue, determinant, and trace of the posterior covariance are plotted for both hearts.}\label{fig:appendix_all_measures}
\end{figure*}%

\end{appendices}
\FloatBarrier

\bibliography{literature}


\begin{thebibliography}{29}
\ifx \bisbn   \undefined \def \bisbn  #1{ISBN #1}\fi
\ifx \binits  \undefined \def \binits#1{#1}\fi
\ifx \bauthor  \undefined \def \bauthor#1{#1}\fi
\ifx \batitle  \undefined \def \batitle#1{#1}\fi
\ifx \bjtitle  \undefined \def \bjtitle#1{#1}\fi
\ifx \bvolume  \undefined \def \bvolume#1{\textbf{#1}}\fi
\ifx \byear  \undefined \def \byear#1{#1}\fi
\ifx \bissue  \undefined \def \bissue#1{#1}\fi
\ifx \bfpage  \undefined \def \bfpage#1{#1}\fi
\ifx \blpage  \undefined \def \blpage #1{#1}\fi
\ifx \burl  \undefined \def \burl#1{\textsf{#1}}\fi
\ifx \doiurl  \undefined \def \doiurl#1{\url{https://doi.org/#1}}\fi
\ifx \betal  \undefined \def \betal{\textit{et al.}}\fi
\ifx \binstitute  \undefined \def \binstitute#1{#1}\fi
\ifx \binstitutionaled  \undefined \def \binstitutionaled#1{#1}\fi
\ifx \bctitle  \undefined \def \bctitle#1{#1}\fi
\ifx \beditor  \undefined \def \beditor#1{#1}\fi
\ifx \bpublisher  \undefined \def \bpublisher#1{#1}\fi
\ifx \bbtitle  \undefined \def \bbtitle#1{#1}\fi
\ifx \bedition  \undefined \def \bedition#1{#1}\fi
\ifx \bseriesno  \undefined \def \bseriesno#1{#1}\fi
\ifx \blocation  \undefined \def \blocation#1{#1}\fi
\ifx \bsertitle  \undefined \def \bsertitle#1{#1}\fi
\ifx \bsnm \undefined \def \bsnm#1{#1}\fi
\ifx \bsuffix \undefined \def \bsuffix#1{#1}\fi
\ifx \bparticle \undefined \def \bparticle#1{#1}\fi
\ifx \barticle \undefined \def \barticle#1{#1}\fi
\bibcommenthead
\ifx \bconfdate \undefined \def \bconfdate #1{#1}\fi
\ifx \botherref \undefined \def \botherref #1{#1}\fi
\ifx \url \undefined \def \url#1{\textsf{#1}}\fi
\ifx \bchapter \undefined \def \bchapter#1{#1}\fi
\ifx \bbook \undefined \def \bbook#1{#1}\fi
\ifx \bcomment \undefined \def \bcomment#1{#1}\fi
\ifx \oauthor \undefined \def \oauthor#1{#1}\fi
\ifx \citeauthoryear \undefined \def \citeauthoryear#1{#1}\fi
\ifx \endbibitem  \undefined \def \endbibitem {}\fi
\ifx \bconflocation  \undefined \def \bconflocation#1{#1}\fi
\ifx \arxivurl  \undefined \def \arxivurl#1{\textsf{#1}}\fi
\csname PreBibitemsHook\endcsname

\bibitem[\protect\citeauthoryear{Bernard et~al.}{2018}]{Acdc_2018}
\begin{barticle}
\bauthor{\bsnm{Bernard}, \binits{O.}},
\bauthor{\bsnm{Lalande}, \binits{A.}},
\bauthor{\bsnm{Zotti}, \binits{C.}},
\bauthor{\bsnm{Cervenansky}, \binits{F.}},
\bauthor{\bsnm{Yang}, \binits{X.}},
\bauthor{\bsnm{Heng}, \binits{P.-A.}},
\bauthor{\bsnm{Cetin}, \binits{I.}},
\bauthor{\bsnm{Lekadir}, \binits{K.}},
\bauthor{\bsnm{Camara}, \binits{O.}},
\bauthor{\bsnm{Gonzalez~Ballester}, \binits{M.A.}},
\bauthor{\bsnm{Sanroma}, \binits{G.}},
\bauthor{\bsnm{Napel}, \binits{S.}},
\bauthor{\bsnm{Petersen}, \binits{S.}},
\bauthor{\bsnm{Tziritas}, \binits{G.}},
\bauthor{\bsnm{Grinias}, \binits{E.}},
\bauthor{\bsnm{Khened}, \binits{M.}},
\bauthor{\bsnm{Kollerathu}, \binits{V.A.}},
\bauthor{\bsnm{Krishnamurthi}, \binits{G.}},
\bauthor{\bsnm{Rohé}, \binits{M.-M.}},
\bauthor{\bsnm{Pennec}, \binits{X.}},
\bauthor{\bsnm{Sermesant}, \binits{M.}},
\bauthor{\bsnm{Isensee}, \binits{F.}},
\bauthor{\bsnm{Jäger}, \binits{P.}},
\bauthor{\bsnm{Maier-Hein}, \binits{K.H.}},
\bauthor{\bsnm{Full}, \binits{P.M.}},
\bauthor{\bsnm{Wolf}, \binits{I.}},
\bauthor{\bsnm{Engelhardt}, \binits{S.}},
\bauthor{\bsnm{Baumgartner}, \binits{C.F.}},
\bauthor{\bsnm{Koch}, \binits{L.M.}},
\bauthor{\bsnm{Wolterink}, \binits{J.M.}},
\bauthor{\bsnm{Išgum}, \binits{I.}},
\bauthor{\bsnm{Jang}, \binits{Y.}},
\bauthor{\bsnm{Hong}, \binits{Y.}},
\bauthor{\bsnm{Patravali}, \binits{J.}},
\bauthor{\bsnm{Jain}, \binits{S.}},
\bauthor{\bsnm{Humbert}, \binits{O.}},
\bauthor{\bsnm{Jodoin}, \binits{P.-M.}}:
\batitle{Deep learning techniques for automatic {MRI} cardiac multi-structures
  segmentation and diagnosis: Is the problem solved?}
\bjtitle{{IEEE} Transactions on Medical Imaging}
\bvolume{37}(\bissue{11}),
\bfpage{2514}--\blpage{2525}
(\byear{2018})
\doiurl{10.1109/TMI.2018.2837502}
\end{barticle}
\endbibitem

\bibitem[\protect\citeauthoryear{Campello et~al.}{2021}]{MnM_2021}
\begin{barticle}
\bauthor{\bsnm{Campello}, \binits{V.M.}},
\bauthor{\bsnm{Gkontra}, \binits{P.}},
\bauthor{\bsnm{Izquierdo}, \binits{C.}},
\bauthor{\bsnm{Martín-Isla}, \binits{C.}},
\bauthor{\bsnm{Sojoudi}, \binits{A.}},
\bauthor{\bsnm{Full}, \binits{P.M.}},
\bauthor{\bsnm{Maier-Hein}, \binits{K.}},
\bauthor{\bsnm{Zhang}, \binits{Y.}},
\bauthor{\bsnm{He}, \binits{Z.}},
\bauthor{\bsnm{Ma}, \binits{J.}},
\bauthor{\bsnm{Parreño}, \binits{M.}},
\bauthor{\bsnm{Albiol}, \binits{A.}},
\bauthor{\bsnm{Kong}, \binits{F.}},
\bauthor{\bsnm{Shadden}, \binits{S.C.}},
\bauthor{\bsnm{Acero}, \binits{J.C.}},
\bauthor{\bsnm{Sundaresan}, \binits{V.}},
\bauthor{\bsnm{Saber}, \binits{M.}},
\bauthor{\bsnm{Elattar}, \binits{M.}},
\bauthor{\bsnm{Li}, \binits{H.}},
\bauthor{\bsnm{Menze}, \binits{B.}},
\bauthor{\bsnm{Khader}, \binits{F.}},
\bauthor{\bsnm{Haarburger}, \binits{C.}},
\bauthor{\bsnm{Scannell}, \binits{C.M.}},
\bauthor{\bsnm{Veta}, \binits{M.}},
\bauthor{\bsnm{Carscadden}, \binits{A.}},
\bauthor{\bsnm{Punithakumar}, \binits{K.}},
\bauthor{\bsnm{Liu}, \binits{X.}},
\bauthor{\bsnm{Tsaftaris}, \binits{S.A.}},
\bauthor{\bsnm{Huang}, \binits{X.}},
\bauthor{\bsnm{Yang}, \binits{X.}},
\bauthor{\bsnm{Li}, \binits{L.}},
\bauthor{\bsnm{Zhuang}, \binits{X.}},
\bauthor{\bsnm{Viladés}, \binits{D.}},
\bauthor{\bsnm{Descalzo}, \binits{M.L.}},
\bauthor{\bsnm{Guala}, \binits{A.}},
\bauthor{\bsnm{Mura}, \binits{L.L.}},
\bauthor{\bsnm{Friedrich}, \binits{M.G.}},
\bauthor{\bsnm{Garg}, \binits{R.}},
\bauthor{\bsnm{Lebel}, \binits{J.}},
\bauthor{\bsnm{Henriques}, \binits{F.}},
\bauthor{\bsnm{Karakas}, \binits{M.}},
\bauthor{\bsnm{Çavuş}, \binits{E.}},
\bauthor{\bsnm{Petersen}, \binits{S.E.}},
\bauthor{\bsnm{Escalera}, \binits{S.}},
\bauthor{\bsnm{Seguí}, \binits{S.}},
\bauthor{\bsnm{Rodríguez-Palomares}, \binits{J.F.}},
\bauthor{\bsnm{Lekadir}, \binits{K.}}:
\batitle{Multi-centre, multi-vendor and multi-disease cardiac segmentation: The
  {M}{\&}{M}s challenge}.
\bjtitle{IEEE Transactions on Medical Imaging}
\bvolume{40}(\bissue{12}),
\bfpage{3543}--\blpage{3554}
(\byear{2021})
\doiurl{10.1109/TMI.2021.3090082}
\end{barticle}
\endbibitem

\bibitem[\protect\citeauthoryear{Isensee et~al.}{2021}]{isensee_nnu-net_2021}
\begin{barticle}
\bauthor{\bsnm{Isensee}, \binits{F.}},
\bauthor{\bsnm{Jaeger}, \binits{P.F.}},
\bauthor{\bsnm{Kohl}, \binits{S.A.A.}},
\bauthor{\bsnm{Petersen}, \binits{J.}},
\bauthor{\bsnm{Maier-Hein}, \binits{K.H.}}:
\batitle{{nnU}-net: a self-configuring method for deep learning-based
  biomedical image segmentation}.
\bjtitle{Nature Methods}
\bvolume{18}(\bissue{2}),
\bfpage{203}--\blpage{211}
(\byear{2021})
\doiurl{10.1038/s41592-020-01008-z}
\end{barticle}
\endbibitem

\bibitem[\protect\citeauthoryear{Shelhamer et~al.}{2017}]{Long_FCN_2017}
\begin{barticle}
\bauthor{\bsnm{Shelhamer}, \binits{E.}},
\bauthor{\bsnm{Long}, \binits{J.}},
\bauthor{\bsnm{Darrell}, \binits{T.}}:
\batitle{Fully convolutional networks for semantic segmentation}.
\bjtitle{IEEE Transactions on Pattern Analysis and Machine Intelligence}
\bvolume{39}(\bissue{4}),
\bfpage{640}--\blpage{651}
(\byear{2017})
\doiurl{10.1109/TPAMI.2016.2572683}
\end{barticle}
\endbibitem

\bibitem[\protect\citeauthoryear{Terhag et~al.}{2025}]{terhag2025uq}
\begin{barticle}
\bauthor{\bsnm{Terhag}, \binits{F.}},
\bauthor{\bsnm{Knechtges}, \binits{P.}},
\bauthor{\bsnm{Basermann}, \binits{A.}},
\bauthor{\bsnm{Tempone}, \binits{R.}}:
\batitle{Uncertainty quantification in machine learning based segmentation: A
  post-hoc approach for left ventricle volume estimation in mri}.
\bjtitle{SIAM/ASA Journal on Uncertainty Quantification}
\bvolume{13}(\bissue{1}),
\bfpage{90}--\blpage{113}
(\byear{2025})
\doiurl{10.1137/23M161433X}
\end{barticle}
\endbibitem

\bibitem[\protect\citeauthoryear{Zhang et~al.}{2010}]{rt_mri10}
\begin{barticle}
\bauthor{\bsnm{Zhang}, \binits{S.}},
\bauthor{\bsnm{Uecker}, \binits{M.}},
\bauthor{\bsnm{Voit}, \binits{D.}},
\bauthor{\bsnm{Merboldt}, \binits{K.-D.}},
\bauthor{\bsnm{Frahm}, \binits{J.}}:
\batitle{Real-time cardiovascular magnetic resonance at high temporal
  resolution: radial flash with nonlinear inverse reconstruction}.
\bjtitle{Journal of Cardiovascular Magnetic Resonance}
\bvolume{12}(\bissue{1}),
\bfpage{39}
(\byear{2010})
\doiurl{10.1186/1532-429X-12-39}
\end{barticle}
\endbibitem

\bibitem[\protect\citeauthoryear{Salerno et~al.}{2017}]{MRI_emerging17}
\begin{barticle}
\bauthor{\bsnm{Salerno}, \binits{M.}},
\bauthor{\bsnm{Sharif}, \binits{B.}},
\bauthor{\bsnm{Arheden}, \binits{H.}},
\bauthor{\bsnm{Kumar}, \binits{A.}},
\bauthor{\bsnm{Axel}, \binits{L.}},
\bauthor{\bsnm{Li}, \binits{D.}},
\bauthor{\bsnm{Neubauer}, \binits{S.}}:
\batitle{Recent advances in cardiovascular magnetic resonance}.
\bjtitle{Circulation: Cardiovascular Imaging}
\bvolume{10}(\bissue{6}),
\bfpage{003951}
(\byear{2017})
\doiurl{10.1161/CIRCIMAGING.116.003951}
\end{barticle}
\endbibitem

\bibitem[\protect\citeauthoryear{Zhang et~al.}{2014}]{zhang_real-time_2014}
\begin{barticle}
\bauthor{\bsnm{Zhang}, \binits{S.}},
\bauthor{\bsnm{Joseph}, \binits{A.A.}},
\bauthor{\bsnm{Voit}, \binits{D.}},
\bauthor{\bsnm{Schaetz}, \binits{S.}},
\bauthor{\bsnm{Merboldt}, \binits{K.-D.}},
\bauthor{\bsnm{Unterberg-Buchwald}, \binits{C.}},
\bauthor{\bsnm{Hennemuth}, \binits{A.}},
\bauthor{\bsnm{Lotz}, \binits{J.}},
\bauthor{\bsnm{Frahm}, \binits{J.}}:
\batitle{Real-time magnetic resonance imaging of cardiac function and
  flow—recent progress}.
\bjtitle{Quantitative Imaging in Medicine and Surgery}
\bvolume{4}(\bissue{5}),
\bfpage{313}
(\byear{2014})
\doiurl{10.3978/j.issn.2223-4292.2014.06.03}
\end{barticle}
\endbibitem

\bibitem[\protect\citeauthoryear{Müller et~al.}{2024}]{hypofon_study}
\begin{barticle}
\bauthor{\bsnm{Müller}, \binits{N.}},
\bauthor{\bsnm{Härtel}, \binits{J.A.}},
\bauthor{\bsnm{Schmitz}, \binits{J.}},
\bauthor{\bsnm{Baur}, \binits{U.}},
\bauthor{\bsnm{Wiesche}, \binits{M.}},
\bauthor{\bsnm{Rieger}, \binits{I.}},
\bauthor{\bsnm{Gerlach}, \binits{D.}},
\bauthor{\bsnm{Stritzky}, \binits{J.}},
\bauthor{\bsnm{Bach}, \binits{A.}},
\bauthor{\bsnm{Hart}, \binits{C.}},
\bauthor{\bsnm{Bros}, \binits{J.}},
\bauthor{\bsnm{Seeger}, \binits{B.}},
\bauthor{\bsnm{Zollmann}, \binits{E.}},
\bauthor{\bsnm{Grau}, \binits{M.}},
\bauthor{\bsnm{Dragutinovic}, \binits{B.}},
\bauthor{\bsnm{Boni}, \binits{L.-M.}},
\bauthor{\bsnm{Hönemann}, \binits{J.-N.}},
\bauthor{\bsnm{Bloch}, \binits{W.}},
\bauthor{\bsnm{Aeschbach}, \binits{D.}},
\bauthor{\bsnm{Elmenhorst}, \binits{E.-M.}},
\bauthor{\bsnm{Herberg}, \binits{U.}},
\bauthor{\bsnm{Hess}, \binits{A.}},
\bauthor{\bsnm{Schumann}, \binits{M.}},
\bauthor{\bsnm{Kratz}, \binits{T.}},
\bauthor{\bsnm{Jordan}, \binits{J.}},
\bauthor{\bsnm{Breuer}, \binits{J.}},
\bauthor{\bsnm{Tank}, \binits{J.}}:
\batitle{Peripheral oxygenation and pulmonary hemodynamics in individuals with
  fontan circulation during 24-hour high-altitude exposure simulation}.
\bjtitle{Circulation}
\bvolume{149}(\bissue{18}),
\bfpage{1466}--\blpage{1468}
(\byear{2024})
\doiurl{10.1161/CIRCULATIONAHA.123.067601}
\end{barticle}
\endbibitem

\bibitem[\protect\citeauthoryear{Tipping}{2001}]{Tipp01}
\begin{barticle}
\bauthor{\bsnm{Tipping}, \binits{M.E.}}:
\batitle{Sparse bayesian learning and the relevance vector machine.}
\bjtitle{Journal of machine learning research}
\bvolume{1},
\bfpage{211}--\blpage{244}
(\byear{2001})
\end{barticle}
\endbibitem

\bibitem[\protect\citeauthoryear{Gemba et~al.}{2017}]{Gemba_SBL_application}
\begin{barticle}
\bauthor{\bsnm{Gemba}, \binits{K.L.}},
\bauthor{\bsnm{Nannuru}, \binits{S.}},
\bauthor{\bsnm{Gerstoft}, \binits{P.}},
\bauthor{\bsnm{Hodgkiss}, \binits{W.S.}}:
\batitle{{Multi-frequency sparse Bayesian learning for robust matched field
  processing}}.
\bjtitle{The Journal of the Acoustical Society of America}
\bvolume{141}(\bissue{5}),
\bfpage{3411}--\blpage{3420}
(\byear{2017})
\doiurl{10.1121/1.4983467}
\end{barticle}
\endbibitem

\bibitem[\protect\citeauthoryear{Gerstoft et~al.}{2016}]{SBL_DOA}
\begin{barticle}
\bauthor{\bsnm{Gerstoft}, \binits{P.}},
\bauthor{\bsnm{Mecklenbräuker}, \binits{C.F.}},
\bauthor{\bsnm{Xenaki}, \binits{A.}},
\bauthor{\bsnm{Nannuru}, \binits{S.}}:
\batitle{Multisnapshot {S}parse {B}ayesian {L}earning for {DOA}}.
\bjtitle{IEEE Signal Processing Letters}
\bvolume{23}(\bissue{10}),
\bfpage{1469}--\blpage{1473}
(\byear{2016})
\doiurl{10.1109/LSP.2016.2598550}
\end{barticle}
\endbibitem

\bibitem[\protect\citeauthoryear{Wang et~al.}{2018}]{wang2_adaptive}
\begin{barticle}
\bauthor{\bsnm{Wang}, \binits{Q.}},
\bauthor{\bsnm{Zhao}, \binits{Z.}},
\bauthor{\bsnm{Chen}, \binits{Z.}},
\bauthor{\bsnm{Nie}, \binits{Z.}}:
\batitle{Grid evolution method for doa estimation}.
\bjtitle{IEEE Transactions on Signal Processing}
\bvolume{66}(\bissue{9}),
\bfpage{2374}--\blpage{2383}
(\byear{2018})
\doiurl{10.1109/TSP.2018.2814998}
\end{barticle}
\endbibitem

\bibitem[\protect\citeauthoryear{Wang et~al.}{2023}]{Wang_adaptive_doa}
\begin{barticle}
\bauthor{\bsnm{Wang}, \binits{Q.}},
\bauthor{\bsnm{Yu}, \binits{H.}},
\bauthor{\bsnm{Li}, \binits{J.}},
\bauthor{\bsnm{Ji}, \binits{F.}},
\bauthor{\bsnm{Chen}, \binits{F.}}:
\batitle{Adaptive grid refinement method for doa estimation via sparse bayesian
  learning}.
\bjtitle{IEEE Journal of Oceanic Engineering}
\bvolume{48}(\bissue{3}),
\bfpage{806}--\blpage{819}
(\byear{2023})
\doiurl{10.1109/JOE.2023.3235055}
\end{barticle}
\endbibitem

\bibitem[\protect\citeauthoryear{Yang et~al.}{2013}]{adaptive4}
\begin{barticle}
\bauthor{\bsnm{Yang}, \binits{Z.}},
\bauthor{\bsnm{Xie}, \binits{L.}},
\bauthor{\bsnm{Zhang}, \binits{C.}}:
\batitle{Off-grid direction of arrival estimation using sparse bayesian
  inference}.
\bjtitle{IEEE Transactions on Signal Processing}
\bvolume{61}(\bissue{1}),
\bfpage{38}--\blpage{43}
(\byear{2013})
\doiurl{10.1109/TSP.2012.2222378}
\end{barticle}
\endbibitem

\bibitem[\protect\citeauthoryear{Tang et~al.}{2024}]{adaptive3}
\begin{barticle}
\bauthor{\bsnm{Tang}, \binits{T.}},
\bauthor{\bsnm{Yang}, \binits{C.}},
\bauthor{\bsnm{Xie}, \binits{T.}},
\bauthor{\bsnm{Liu}, \binits{Y.}},
\bauthor{\bsnm{Xu}, \binits{L.}},
\bauthor{\bsnm{Chen}, \binits{D.}}:
\batitle{Fbc-sbl: Frequency band clustering sparse bayesian learning for
  off-grid wideband doa estimation with different frequency bands}.
\bjtitle{IEEE Geoscience and Remote Sensing Letters}
\bvolume{21},
\bfpage{1}--\blpage{5}
(\byear{2024})
\doiurl{10.1109/LGRS.2024.3424540}
\end{barticle}
\endbibitem

\bibitem[\protect\citeauthoryear{Murphy}{2023}]{pml2Book}
\begin{bbook}
\bauthor{\bsnm{Murphy}, \binits{K.P.}}:
\bbtitle{Probabilistic Machine Learning: Advanced Topics}.
\bpublisher{MIT Press},
\blocation{Cambridge, MA}
(\byear{2023}).
\burl{http://probml.github.io/book2}
\end{bbook}
\endbibitem

\bibitem[\protect\citeauthoryear{Wipf and Rao}{2007}]{MSBL2007}
\begin{barticle}
\bauthor{\bsnm{Wipf}, \binits{D.P.}},
\bauthor{\bsnm{Rao}, \binits{B.D.}}:
\batitle{An empirical bayesian strategy for solving the simultaneous sparse
  approximation problem}.
\bjtitle{IEEE Transactions on Signal Processing}
\bvolume{55}(\bissue{7}),
\bfpage{3704}--\blpage{3716}
(\byear{2007})
\doiurl{10.1109/TSP.2007.894265}
\end{barticle}
\endbibitem

\bibitem[\protect\citeauthoryear{MacKay}{1992}]{Mackay}
\begin{barticle}
\bauthor{\bsnm{MacKay}, \binits{D.J.C.}}:
\batitle{Bayesian interpolation}.
\bjtitle{Neural Computation}
\bvolume{4},
\bfpage{415}--\blpage{447}
(\byear{1992})
\end{barticle}
\endbibitem

\bibitem[\protect\citeauthoryear{Paindaveine}{2008}]{Measure_spread08}
\begin{barticle}
\bauthor{\bsnm{Paindaveine}, \binits{D.}}:
\batitle{A canonical definition of shape}.
\bjtitle{Statistics \& Probability Letters}
\bvolume{78}(\bissue{14}),
\bfpage{2240}--\blpage{2247}
(\byear{2008})
\doiurl{10.1016/j.spl.2008.01.094}
\end{barticle}
\endbibitem

\bibitem[\protect\citeauthoryear{Dümbgen}{1998}]{dumbgen_tylers_1998}
\begin{barticle}
\bauthor{\bsnm{Dümbgen}, \binits{L.}}:
\batitle{On {Tyler}'s {M}-{Functional} of {Scatter} in {High} {Dimension}}.
\bjtitle{Annals of the Institute of Statistical Mathematics}
\bvolume{50}(\bissue{3}),
\bfpage{471}--\blpage{491}
(\byear{1998})
\doiurl{10.1023/A:1003573311481}
\end{barticle}
\endbibitem

\bibitem[\protect\citeauthoryear{Visuri et~al.}{2003}]{Visuri_trace03}
\begin{barticle}
\bauthor{\bsnm{Visuri}, \binits{S.}},
\bauthor{\bsnm{Ollila}, \binits{E.}},
\bauthor{\bsnm{Koivunen}, \binits{V.}},
\bauthor{\bsnm{Möttönen}, \binits{J.}},
\bauthor{\bsnm{Oja}, \binits{H.}}:
\batitle{Affine equivariant multivariate rank methods}.
\bjtitle{Journal of Statistical Planning and Inference}
\bvolume{114}(\bissue{1}),
\bfpage{161}--\blpage{185}
(\byear{2003})
\doiurl{10.1016/S0378-3758(02)00469-X} .
\bcomment{C.R. Rao 80th Birthday Felicitation Volume, Part IV}
\end{barticle}
\endbibitem

\bibitem[\protect\citeauthoryear{Dümbgen and Tyler}{2005}]{duembgen05det}
\begin{barticle}
\bauthor{\bsnm{Dümbgen}, \binits{L.}},
\bauthor{\bsnm{Tyler}, \binits{D.E.}}:
\batitle{On the breakdown properties of some multivariate m-functionals}.
\bjtitle{Scandinavian Journal of Statistics}
\bvolume{32}(\bissue{2}),
\bfpage{247}--\blpage{264}
(\byear{2005})
\end{barticle}
\endbibitem

\bibitem[\protect\citeauthoryear{Salibián-Barrera et~al.}{2006}]{sal06det}
\begin{barticle}
\bauthor{\bsnm{Salibián-Barrera}, \binits{M.}},
\bauthor{\bsnm{Aelst}, \binits{S.V.}},
\bauthor{\bsnm{Willems}, \binits{G.}}:
\batitle{Principal components analysis based on multivariate mm estimators with
  fast and robust bootstrap}.
\bjtitle{Journal of the American Statistical Association}
\bvolume{101}(\bissue{475}),
\bfpage{1198}--\blpage{1211}
(\byear{2006})
\end{barticle}
\endbibitem

\bibitem[\protect\citeauthoryear{Randles}{2000}]{Randles00EV}
\begin{barticle}
\bauthor{\bsnm{Randles}, \binits{R.H.}}:
\batitle{\textit{A Simpler, Affine-Invariant, Multivariate, Distribution-Free
  Sign Test}}.
\bjtitle{Journal of the American Statistical Association}
\bvolume{95}(\bissue{452}),
\bfpage{1263}--\blpage{1268}
(\byear{2000})
\end{barticle}
\endbibitem

\bibitem[\protect\citeauthoryear{Nemhauser
  et~al.}{1978}]{nemhauser_analysis_1978}
\begin{barticle}
\bauthor{\bsnm{Nemhauser}, \binits{G.L.}},
\bauthor{\bsnm{Wolsey}, \binits{L.A.}},
\bauthor{\bsnm{Fisher}, \binits{M.L.}}:
\batitle{An analysis of approximations for maximizing submodular set
  functions—{I}}.
\bjtitle{Mathematical Programming}
\bvolume{14}(\bissue{1}),
\bfpage{265}--\blpage{294}
(\byear{1978})
\doiurl{10.1007/BF01588971}
\end{barticle}
\endbibitem

\bibitem[\protect\citeauthoryear{Hashemi et~al.}{2019}]{hashemi19a}
\begin{barticle}
\bauthor{\bsnm{Hashemi}, \binits{A.}},
\bauthor{\bsnm{Ghasemi}, \binits{M.}},
\bauthor{\bsnm{Vikalo}, \binits{H.}},
\bauthor{\bsnm{Topcu}, \binits{U.}}:
\batitle{Submodular observation selection and information gathering for
  quadratic models}.
\bjtitle{Proceedings of the 36th International Conference on Machine Learning}
\bvolume{97},
\bfpage{2653}--\blpage{2662}
(\byear{2019})
\end{barticle}
\endbibitem

\bibitem[\protect\citeauthoryear{Bartlett}{1951}]{Barlett_sherman_morrison}
\begin{barticle}
\bauthor{\bsnm{Bartlett}, \binits{M.S.}}:
\batitle{{An Inverse Matrix Adjustment Arising in Discriminant Analysis}}.
\bjtitle{The Annals of Mathematical Statistics}
\bvolume{22}(\bissue{1}),
\bfpage{107}--\blpage{111}
(\byear{1951})
\doiurl{10.1214/aoms/1177729698}
\end{barticle}
\endbibitem

\bibitem[\protect\citeauthoryear{Das and Kempe}{2011}]{kempe_weak_submodular11}
\begin{botherref}
\oauthor{\bsnm{Das}, \binits{A.}},
\oauthor{\bsnm{Kempe}, \binits{D.}}:
Submodular meets spectral: Greedy algorithms for subset selection, sparse
  approximation and dictionary selection,
1057--1064
(2011)
\end{botherref}
\endbibitem

\end{thebibliography}

\end{document}